\theoremstyle{plain}
\newtheorem{theorem}{Theorem}[section]
\newtheorem{proposition}[theorem]{Proposition}
\newtheorem{lemma}[theorem]{Lemma}
\newtheorem{corollary}[theorem]{Corollary}
\theoremstyle{definition}
\newtheorem{definition}[theorem]{Definition}
\theoremstyle{remark}
\newtheorem{example}[theorem]{Example}
\newcommand{\Pol}{\ensuremath{\mathsf{P}}}
\newcommand{\NP}{\ensuremath{\mathsf{NP}}}
\newcommand{\FP}{\ensuremath{\mathsf{FP}}}
\newcommand{\PM}{\ensuremath{\mathsf{PM}}}
\newcommand{\PHM}{\ensuremath{\mathsf{PHM}}}
\newcommand{\PVM}{\ensuremath{\mathsf{PVM}}}
\newcommand{\MAC}{\ensuremath{\mathsf{UFMAC}}}
\newcommand{\USMAC}{\ensuremath{\mathsf{USMAC}}}
\newcommand{\MAR}{\text{MAR}}
\newcommand{\HMAR}{\text{HMAR}}
\newcommand{\VMAR}{\text{VMAR}}
\newcommand{\csp}{\text{CSP}}
\newcommand{\ones}{\text{ONES}}
\definecolor{lightskyblue}{rgb}{0.53, 0.81, 0.98}
\definecolor{lightskyblue}{rgb}{0.53, 0.81, 0.98}
\definecolor{lightskyblue}{rgb}{0.53, 0.81, 0.98}
\tikzset{tips=proper,edge/.style = {->,>=latex'},}
\icmltitlerunning{The Limits of Tractable Marginalization}
\begin{document}

\twocolumn[
\icmltitle{The Limits of Tractable Marginalization}

\icmlsetsymbol{equal}{*}
\icmlsetsymbol{equal2}{\#}

\begin{icmlauthorlist}
\icmlauthor{Oliver Broadrick}{equal,yyy}
\icmlauthor{Sanyam Agarwal}{equal,zzz}
\icmlauthor{Guy Van den Broeck}{equal2,yyy}
\icmlauthor{Markus Bl\"{a}ser}{equal2,zzz}
\end{icmlauthorlist}

\icmlaffiliation{yyy}{Department of Computer Science, University of California, Los Angeles, United States}
\icmlaffiliation{zzz}{Department of Computer Science, Saarland University, Saarbr\"{u}cken, Germany}

\icmlcorrespondingauthor{Oliver Broadrick}{odbroadrick@gmail.com}
\icmlcorrespondingauthor{Sanyam Agarwal}{agarwal@cs.uni-saarland.de}

\icmlkeywords{Machine Learning, ICML}

\vskip 0.3in
]

\printAffiliationsAndNotice{\textsuperscript{*,\#}Equal contribution.}

\begin{abstract}
Marginalization -- summing a function over all assignments to a subset of its inputs -- is a fundamental computational problem with applications from probabilistic inference to formal verification.
Despite its computational hardness in general, there exist many classes of functions (e.g., probabilistic models) for which marginalization remains tractable, and they can be commonly expressed by polynomial size arithmetic circuits computing multilinear polynomials.
This raises the question, can \textit{all} functions with polynomial time marginalization algorithms be succinctly expressed by such circuits? We give a negative answer, exhibiting simple functions with tractable marginalization yet no efficient representation by known models, assuming $\FP\neq\#\Pol$ (an assumption implied by $\Pol \neq \NP$). 
To this end, we identify a hierarchy of complexity classes corresponding to stronger forms of marginalization, all of which are efficiently computable on the known circuit models. 
We conclude with a completeness result, showing that whenever there is an efficient real RAM performing virtual evidence marginalization for a function, then there are small circuits for that function's multilinear representation.

\end{abstract}

\section{Introduction}

Marginalization is a fundamental computational problem. In machine learning for example, many architectures are designed around an instance of marginalization. In Bayesian learning (e.g., Bayesian neural networks), model parameters are marginalized to obtain a posterior distribution; variational autoencoders approximate a marginalization over their latent space; autoregressive and discrete diffusion models are trained to predict (conditional) marginal probabilities \citep{kingma2019introduction,bishop2023deep}. In the case of boolean valued functions, marginalization corresponds to the well known task of model counting, with applications from formal verification to cryptography \citep{gomes2021model,arteta2016counting}.

Unfortunately, marginalization is a notoriously hard problem in general. As models become increasingly expressive-efficient, able to succinctly express larger classes of distributions, they often sacrifice the tractability of marginalization \citep{cooper1990computational,ROTH1996273}.
However, there is a vast body of work aimed at identifying classes of functions for which efficient marginalization remains tractable,
including well known probabilistic models like hidden Markov models, determinantal point processes, and probabilistic circuits in general, as well as logical representation languages like d-DNNF \citep{rabiner1986introduction,kulesza2012determinantal,choi2020probabilistic,darwiche2001tractable}.
Moreover, a recent line of work has shown how known models with tractable marginalization can be commonly viewed as arithmetic circuits computing multilinear polynomials, resulting in a unified model that we formalize as \emph{uniform finally multilinear arithmetic circuits (UFMACs)} \citep{zhang2020relationship,zhang2021probabilistic,agarwal2024probabilistic,BroadrickUAI24}.

The fact that known functions with tractable marginalization can be expressed by polynomial size UFMACs begs the question.
\emph{Do all probability distributions over binary random variables with tractable marginalization have polynomial size (uniform) finally multilinear arithmetic circuits?} 
A positive answer to this question would show that the long line of work on tractable models has successfully arrived at a maximally expressive-efficient modeling language for tractable marginalization. We give a negative answer, suggesting the possibility of yet more expressive-efficient tractable models.

\begin{figure*}[ht!]
\centering
\begin{tikzpicture}[]
\node[ellipse,minimum width=5cm,minimum height=1.9cm,align=center,draw] (approaches) at (-4.5,0) {};\node (syntactic_label) at (-4.5,.4) {\USMAC};
\node (examples) at (-4.5,-.3) {d-DNNFs, SPNs};
\node[ellipse,minimum width=8cm,minimum height=2.4cm,align=center,draw] (approaches) at (-3,0) {};\node (MAClabel) at (-.8,.4) {\MAC};
\node (examples) at (-.8,-.3) {DPPs, PGCs};
\node[ellipse,minimum width=10cm,minimum height=2.8cm,align=center,draw] (V) at (-2,0) {};\node (Vlabel) at (1.6,.4) {\PVM};
\node[ellipse,minimum width=12cm,minimum height=3.2cm,align=center,draw] (H) at (-1,0) {};\node (Hlabel) at (3.8,.4) {\PHM};
\node[ellipse,minimum width=14cm,minimum height=3.6cm,align=center,draw] (M) at (0,0) {};\node (Mlabel) at (6,.4) {\PM};
\node[ellipse,minimum width=16cm,minimum height=4cm,align=center,draw] (FP) at (1,0) {};\node (FPlabel) at (8,.4) {\FP};
\node[circle, fill, minimum size=5pt,inner sep=0pt, outer sep=0pt] (fpt) at (5.5,-.3) {};\node[right=.1cm of fpt] (f) {$f_{\text{aff}}$};
\node[circle, fill, minimum size=5pt,inner sep=0pt, outer sep=0pt] (vpt) at (7.5,-.3) {};\node[below=.1cm of vpt] (f) {verifiers};
\end{tikzpicture}

\caption{Relationships between complexity classes for tractable marginalization. $\MAC$ contains functions efficiently expressible using known tractable models including determinantal point processes (DPPs) and probabilistic generating circuits (PGCs) as well as the subclass $\USMAC$ containing for example, deterministic decomposable negation normal forms (d-DNNFs) and sum product networks (SPNs). $\PM$ is the set of all functions with polynomial time marginalization. $\PHM$ and $\PVM$ consist of functions that are tractable for Hamming weight marginalization 
 and virtual evidence marginalization respectively (see \cref{sec:hmar,sec:vmar}). We exhibit the function $f_{\text{aff}}$ which is contained in $\PM\backslash \PHM$ assuming $\FP\neq\#\Pol$. $\FP$ is the class of polynomial time computable functions, included for context.
 }%
\label{fig:main}
\end{figure*}

To rigorously answer this question, we define the tractability of marginalization for a function in terms of its polynomial time computability. Then, we observe that functions with UFMACs are actually tractable for more powerful forms of marginalization than standard variable marginalization. Specifically, we show that UFMACs support tractable summation over inputs of a given Hamming
weight (\cref{sec:hmar}) and support marginalization in the presence of virtual evidence (\cref{sec:vmar}). We show that the class of functions tractable for each of these queries form a hierarchy of inclusions, and we prove that certain inclusions are strict by exhibiting separating functions assuming standard complexity-theoretic conjectures (i.e., $\Pol \neq \NP$): see \cref{fig:main} for a visual summary. This reveals distributions that are tractable for marginalization but cannot be computed by a UFMAC, answering our central question.%

While it turns out that not all distributions with tractable marginalization can be expressed by UFMACs, we then ask the corresponding question for the stronger forms of marginalization. We find that UFMACs are complete for tractable virtual evidence marginalization in the real RAM model of computation in the sense that if there is a polynomial time real RAM that performs virtual evidence marginalization for a function, then the function has UFMACs.

\section{Tractable Arithmetic Circuits}\label{sec:prelims}

Classical probabilistic models like hidden Markov models \citep{rabiner1986introduction} and Chow-Liu trees 
\citep{chow1968approximating} have efficient algorithms for marginalization based on sums and products of probabilities. For general graphical models this remains true, albeit with the 
caveat that the 
efficiency of the algorithm depends on the underlying graph structure.
In the nineties, it was observed
that such algorithms
naturally correspond to the computation of certain multilinear polynomials \citep{castillo1995parametric,darwiche1997query}. Actually, a direct representation of such a polynomial can be exponentially more succinct than an equivalent graphical model \citep{roth2009learning}, and so recent tractable probabilistic models have focused on representing the underlying polynomial as an arithmetic circuit, a computation graph consisting of sums and products. Known models with tractable marginalization can be viewed in this common language, including bounded-treewidth graphical models \citep{chow1968approximating,meila2000learning,rabiner1986introduction,koller2009probabilistic}, determinantal point processes \citep{kulesza2012determinantal}, probabilistic sentential decision diagrams \citep{kisa2014probabilistic}, sum-product networks \citep{poon2011sum}, cutset networks \citep{rahman2014cutset}, characteristic circuits \citep{yu2024characteristic}, and probabilistic generating circuits \citep{zhang2021probabilistic,harviainen2023inference,blaser23notall,BroadrickTPM24}. 
In this section, we formalize the unified tractable model as \emph{finally multilinear arithmetic circuits} (UFMACs).
A UFMAC expresses a function $f:\{0,1\}^n\to\mathbb{Q}$ by its multilinear polynomial and represents the polynomial by an arithmetic circuit.

\subsection{Multilinear Polynomials}
Consider a function $f:\{0,1\}^n\to\mathbb{Q}$. 
Among infinitely many polynomials that compute $f$ on $\{0,1\}^n$, there is a unique \emph{multilinear} polynomial (a fundamental object; for example, see \citet{o2014analysis}). A polynomial is multilinear if its degree in every individual variable is at most one. The unique multilinear polynomial computing $f$ can be found by interpolation:
\[
p(x_1,\ldots,x_n)=\sum_{S\subseteq [n]}f(v_S)\prod_{i\in S}x_i\prod_{i\notin S}(1-x_i)
\]
where $[n]=\{1,2,\ldots,n\}$ and $v_S$ denotes the \emph{characteristic vector} of the set $S$, that is, the element of $\{0,1\}^n$ with $i$th entry $1$ for $i\in S$ and $i$th entry $0$ for $i\notin S$. 
It turns out that the tractability of marginalization for a function $f$ can be characterized to some extent in terms of the ability to evaluate its multilinear polynomial $p$ on elements of $\mathbb{Q}^n$ beyond just $\{0,1\}^n$, as we will see in \cref{sec:mar,sec:hmar,sec:vmar}.%

\subsection{Arithmetic Circuits}

Arithmetic circuits are a thoroughly studied model for the computation of polynomials, an algebraic analogue to boolean circuits: see \citet{shpilka2010arithmetic} and \citet{saptharishi2015survey} for surveys. 

\begin{definition}[Arithmetic Circuits]
An arithmetic circuit over a field $\mathbb{F}$ in variables $X$ is a directed acyclic graph. A node with in-degree $0$ is an \emph{input} node and is labeled with an element in $\mathbb{F}$ or $X$. All other nodes are sum or product nodes, labeled accordingly by $+$ or $\times$. The node with out-degree $0$ is the \emph{output} node.\footnote{We consider arithmetic circuits with a single output node, though in general they can have multiple.} 
\end{definition}

\begin{figure*}[t]
\centering
\begin{minipage}{0.30\textwidth}
    \centering
    \begin{tabular}{ c c c | c }
     $X_1$ & $X_2$ & $X_3$ & $f$ \\ 
    \hline
     0 & 0 & 0 & 0.05\\ 
     1 & 0 & 0 & 0.15\\ 
     0 & 1 & 0 & 0.1\\ 
     1 & 1 & 0 & 0.3\\
     0 & 0 & 1 & 0.06\\
     1 & 0 & 1 & 0.18\\
     0 & 1 & 1 & 0.04\\
     1 & 1 & 1 & 0.12
    \end{tabular}
\end{minipage}%
\begin{minipage}{0.60\textwidth}\centering
\begin{tikzpicture}[xscale=0.7, yscale=0.6]%
    \node[shape=circle, draw=black, minimum size=2em] (1) at (-1,-3) {$\times$};
    \node[shape=circle, draw=black, minimum size=2em] (2) at (-4,-5) {$+$};
    \node[shape=circle, draw=black, minimum size=2em] (3) at (2.5,-4.5) {$+$};
    \node[shape=circle, draw=black, minimum size=2em] (4) at (6,-5) {$\times$};
    \node[shape=circle, draw=black, minimum size=2em] (11) at (-1,-5) {$\times$};
    \node[shape=circle, draw=black, minimum size=2em] (5) at (-2,-7) {$+$};
    \node[shape=circle, draw=black, minimum size=2em] (6) at (4,-7) {$+$};
    \node[shape=circle, draw=black, minimum size=2em] (12) at (1,-7) {$+$};
    \node[shape=circle, draw=black, minimum size=2em] (7) at (-6,-9) {$x_1$};
    \node[shape=circle, draw=black, minimum size=2em] (8) at (-3,-9) {$1$};
    \node[shape=circle, draw=black, minimum size=2em] (9) at (4,-9) {$x_2$};
    \node[shape=circle, draw=black, minimum size=2em] (10) at (7,-9) {$x_3$};
    
    \draw[<-] (1) -- (2);
    \draw[<-] (1) -- (3);
    \draw[<-] (4) -- (12);
    \draw[<-] (4) -- (10);
    \draw[<-] (2) -- (8);
    \draw[<-] (5) -- (8);
    \draw[<-] (6) -- (8);
    \draw[<-] (11) -- (5);
    \draw[<-] (11) -- (6);
    \draw[<-] (3) -- (4) node[midway, above right, xshift=-5pt, font=\footnotesize] {$0.02$};
    \draw[<-] (3) -- (11) node[midway, above left, xshift=7pt, font=\footnotesize] {$0.05$};
    \draw[<-] (2) -- (7) node[midway, above left, xshift=2pt, font=\footnotesize] {$2$};
    \draw[<-] (5) -- (9) node[below left, xshift=-30pt, yshift=15pt, font=\footnotesize] {};
    \draw[<-] (12) -- (9) node[midway, above right, xshift=2.5pt, yshift=-10pt, font=\footnotesize] {$-1$};
    \draw[<-] (12) -- (8) node[above left, xshift=32pt, yshift=10pt, font=\footnotesize] {$3$}; 
    \draw[<-] (6) -- (10) node[midway, yshift=-8pt,  font=\footnotesize] {$-1$};
    
\end{tikzpicture}
\end{minipage}

\centering
    \begin{align*}
      p(x_1, x_2, x_3) &= .1x_1+.05x_2+.1x_1x_2+.01x_3-.07x_2x_3+.02x_1x_3-.14x_1x_2x_3+.05\\
      \bar{p}(x_1, \bar{x}_1, x_2, \bar{x}_2, x_3, \bar{x}_3) 
                        &= 0.05\bar{x}_1\bar{x}_2\bar{x}_3 + 0.15x_1\bar{x}_2\bar{x}_3 + 0.1\bar{x}_1x_2\bar{x}_3+ 0.3x_1x_2\bar{x}_3 + 0.06\bar{x}_1\bar{x}_2x_3 + 0.18x_1\bar{x}_2x_3 \\
                        &\quad\quad\quad + 0.04\bar{x}_1x_2x_3 + 0.12x_1x_2x_3.
    \end{align*}

\caption{
An example function $f$ (probability mass function over three binary random variables) and its multilinear polynomial $p$, network polynomial $\bar{p}$, and a multilinear arithmetic circuit computing $p$.
}
\label{fig:example}
\end{figure*}

Each node in an arithmetic circuit computes a polynomial: (i) each leaf computes the polynomial by which it is labeled, (ii) each sum node computes the sum of the polynomials computed by its children, and (iii) each product node computes the product of the polynomials computed by its children. The polynomial computed by a circuit is the polynomial computed by its output node. 

Classes of arithmetic circuits that compute multilinear polynomials are themselves well studied \citep{nisan1996lower,raz2009multi,raz2008lower}. A circuit is called \emph{syntactically multilinear} if every product node has the property that its children mention disjoint sets of variables. An arithmetic circuit is called \emph{(semantically) multilinear} if every node in the circuit computes a multilinear polynomial. 
However, there exist multilinear polynomials for which the only known polynomial size circuits do not conform to either of these (syntactic or semantic) multilinearity properties. One notable example is the determinant, a multilinear polynomial central to determinantal point processes \citep{kulesza2012determinantal}, a popular tractable probabilistic model.

\subsection{Uniform Finally Multilinear Arithmetic Circuits}

In this work we consider a class of arithmetic circuits computing multilinear polynomials with (minimal) restrictions to guarantee efficient evaluation by a uniform algorithm, general enough to express known probabilistic models with tractable marginalization.
In particular, while syntactically and semantically multilinear circuits force all internal nodes to compute multilinear polynomials, we allow intermediate nodes to accumulate higher degree in each variable. 
However, if intermediate degrees are allowed to grow arbitrarily, then the circuits can compute polynomials like $x^{2^n}+2^{2^n}$ which cannot be efficiently evaluated (in polynomial time).
We thus strike a balance, defining \emph{finally multilinear} arithmetic circuits whose internal nodes compute polynomials of polynomially bounded degree, high enough to allow flexible algorithms like determinants, but low enough to guarantee polynomial time evaluation of the circuit at rational points. Indeed, it is an open question whether polynomials like the determinant can be efficiently computed without this additional flexibility \citep[Open Problem 3]{shpilka2010arithmetic}.
In the following we deal with \emph{families} of circuits $(C_n)_{n=1,2,\ldots}=(C_1,C_2,\ldots$) with each $C_n$ a circuit, often using $(C_n)$ as shorthand for $(C_n)_{n=1,2,\ldots}$.

\begin{definition}[Finally Multilinear Circuits]\label{def:finally_multilinear}
The arithmetic circuit family $(C_n)$ is \emph{finally multilinear} if for $n=1,2,\ldots$, 
\begin{itemize}
\item $C_n$ computes a multilinear polynomial, and
\item after replacing each of the constants in $C_n$, $c_1,\ldots,c_k\in\mathbb{F}$, by fresh variables $z_1,\ldots,z_k$, the polynomial computed by any internal node (in variables $x_1,\ldots,x_n,z_1,\ldots,z_k$) has total degree at most polynomial in $n$.
\end{itemize}
\end{definition}

The only additional restriction needed to ensure that a circuit family can be efficiently evaluated at rational points is \emph{uniformity}, which requires that there be an efficient method for obtaining a description of any circuit from the family. Uniformity rules out the unrealistic power of nonuniform circuit families which can, for example, compute undecidable languages. 
\begin{definition}[Uniform Circuits]\label{def:uniform}
An arithmetic circuit family $(C_n)_{n=1,2,\ldots}$ is \emph{uniform} if there exists a polynomial time Turing machine which on input $1^n$ outputs $C_n$ (in some reasonable fixed representation: see \cref{app:representation}).
\end{definition}
We note that if an arithmetic circuit family $(C_n)$ is uniform, then $(C_n)$ is of polynomial size (since time bounds space).
\cref{def:uniform,def:finally_multilinear} together provide sufficient conditions for a circuit to be efficiently evaluated at rational points. 

\begin{lemma}\label{lem:ufmac_eval}
Let $(C_n)$
be a uniform finally multilinear arithmetic circuit family. Then there exists a polynomial time Turing machine which on input $x_1,x_2,\ldots,x_n\in\mathbb{Q}$ computes $C_n(x_1,x_2,\ldots,x_n)$.
\end{lemma}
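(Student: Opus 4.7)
The plan is to evaluate the circuit bottom-up in topological order, representing the value at each node as a rational in a form whose numerator and denominator both have polynomial bit length. The central difficulty is that naive rational arithmetic causes denominators to compound multiplicatively at each gate, reaching exponential bit length across the circuit. The finally multilinear property is exactly what lets us sidestep this: polynomially bounded degree forces polynomially bounded common denominators.

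Concretely, I would first use uniformity (\cref{def:uniform}) to construct $C_n$ from $1^n$ in polynomial time. Next, compute upper bounds on the degrees $d_v$ at every node $v$ by the inductive rules $d_\text{leaf} = 1$, $d_{a+b} = \max(d_a, d_b)$, and $d_{a \cdot b} = d_a + d_b$; all are polynomial by the finally multilinear assumption. Let $B$ be the product of the denominators of the input rationals $x_i$ and the circuit constants $c_j$ (each in lowest terms), so $\log B$ is polynomial. Represent the value at node $v$ uniformly as $N_v/B^{d_v}$ with $N_v \in \mathbb{Z}$: an input $a/b$ becomes $(aB/b)/B$ (an integer numerator since $b \mid B$); a sum node takes $N_v = N_a B^{d_v - d_a} + N_b B^{d_v - d_b}$; a product takes $N_v = N_a N_b$ with $d_v = d_a + d_b$.

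The main obstacle, which relies on \cref{def:finally_multilinear} essentially, is to bound $|N_v|$ by $2^{\mathrm{poly}(n)}$. Because the polynomial $p_v$ (with constants replaced by fresh variables) has non-negative integer coefficients, a monomial-wise bound gives $|p_v(x_1,\dots,c_k)| \leq \max(1,M)^{d_v} \cdot p_v(1,\dots,1)$, where $M \leq 2^{\mathrm{poly}(n)}$ bounds the input magnitudes. Tracking the ratio of bit length to degree through the circuit, product gates take a weighted average of their children's ratios, while sum gates raise it by an additive polynomial amount; accumulated across the polynomially many gates, the ratio stays $\mathrm{poly}(n)$, so $|N_v| \leq B^{d_v} |p_v|$ has polynomial bit length. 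Each gate update then consists of a constant number of integer operations on polynomial-bit-length operands (using repeated squaring for $B^{d_v - d_a}$), each in polynomial time, and there are polynomially many gates --- giving polynomial total runtime.
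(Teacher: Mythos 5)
Your proof is correct, and it takes a genuinely different route from the paper's. The paper decomposes the problem into two steps: first a reduction from rational to integral evaluation (\cref{lem:reduce_rational_to_integral}: clear denominators to write the input as $c_i/D$, evaluate $p(tc_1,\ldots,tc_n)$ at $n+1$ integral values of $t$, interpolate the univariate polynomial, and recombine its coefficients with powers of $1/D$), and then a bit-length bound for integral evaluation, proved by induction on the degree of a node ($B(d)\le(3d-1)q(N)$), where chains of additions are controlled by the bound $q(N)$ on circuit size. You skip the interpolation reduction entirely by threading a single common denominator through the circuit, representing each node value as $N_v/B^{d_v}$, and you bound the numerators via the monomial-wise estimate $|p_v(\cdot)|\le \max(1,M)^{d_v}\,p_v(1,\ldots,1)$ plus the ratio-tracking argument. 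Both proofs rest on the same two pillars: the degree bound of \cref{def:finally_multilinear}, and the fact that after constants are replaced by fresh variables the circuit contains no constants, so every node computes a polynomial with non-negative coefficients and no cancellation can occur. The paper uses this only implicitly (it is exactly what makes its case analysis legitimate: sums then have degree equal to the maximum of their children's degrees, so a node of degree $k+1$ is either a product of lower-degree nodes or a sum of nodes of degree at most $k+1$); you invoke it explicitly, which is cleaner. Your approach buys a one-pass, self-contained algorithm with no black-box interpolation; the paper's buys modularity, since its rational-to-integral reduction applies to arbitrary circuit families, not just finally multilinear ones. Two spots in your write-up should be made precise, though neither is a real gap. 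First, your syntactic degree rules ($\max$ at sums, $+$ at products) are controlled by the finally multilinear assumption only \emph{because} of the no-cancellation observation: without it, formal degree could exceed the semantic degree that \cref{def:finally_multilinear} bounds (cancellation at sum gates), so that observation is already doing work when you assert $d_v=\mathrm{poly}(n)$, not only in the numerator bound. Second, the ratio-tracking sketch should be replaced by an explicit invariant; for instance $\log_2|N_v|\le d_v\left(\log_2 B+\log_2\max(1,M)+s_v\right)$, where $s_v$ is the maximum number of sum gates on any path from $v$ to a leaf, is preserved at product gates (degrees add, $s_v\ge\max(s_a,s_b)$) and at sum gates (using $s_v\ge s_a+1,\,s_b+1$), and immediately gives the polynomial bit-length bound you need.
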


The efficient evaluation algorithm promised by \cref{lem:ufmac_eval} is simple: on input $n$ rationals, first obtain the circuit $C_n$ by uniformity, and then evaluate it in the natural way. We give the full proof that this is guaranteed to take polynomial time in \cref{app:efficient_circuit_evaluation}. The class of functions whose multilinear polynomial is computable by uniform finally multilinear arithmetic circuits constitutes the class of functions for which efficient marginalization algorithms are currently known, and thus the class of circuits central to this paper.

\begin{definition}[The Class $\MAC$]
A function family $(f_n)$
with\footnote{Note that fixing an encoding $\mathbb{Q}\to \{0,1\}^*$ allows functions $\{0,1\}^n\to\mathbb{Q}$ to be viewed as functions $\{0,1\}^n\to\{0,1\}^*$.} $f_n:\{0,1\}^n\to\mathbb{Q}$ is in $\MAC$ if there exists a uniform finally multilinear arithmetic circuit family computing the family of multilinear representations of $(f_n)$.
\end{definition}

An important subclass of $\MAC$ is the analogously defined class of uniform \emph{syntactically multilinear} arithmetic circuits ($\USMAC$). A majority of known tractable models (both probabilistic and logical) belong to this subclass including probabilistic sentential decision diagrams \cite{kisa2014probabilistic}, sum-product networks \cite{poon2011sum}, d-DNNF \cite{darwiche2001tractable}, and more. Determinantal point processes are the best known example of a tractable model known to live in $\MAC$ but not known to live in $\USMAC$. 

In the next section we formalize the problem of marginalization, and show that efficient marginalization is easy for functions in $\MAC$, giving a simple, unified view of existing marginalization algorithms.

\textbf{Complexity basics.}
For notions from basic computational complexity theory, including complexity classes $\Pol$, $\FP$, $\NP$, and $\#\Pol$, we use standard definitions and notation \citep{arora2009computational}. We note that while the famous classes $\Pol$ and $\NP$ contain decision problems (functions of the form $f:\{0,1\}^*\to\{0,1\}$), $\FP$ and $\#\Pol$ can be viewed as natural analogues for general functions, $f:\{0,1\}^*\to\{0,1\}^*$. In particular, $\FP$ contains all functions $f:\{0,1\}^*\to\{0,1\}^*$ computable in polynomial time. Moreover, $\FP\neq \#\Pol$ is a well known conjecture in computational complexity theory and is a weaker assumption than $\Pol \neq \NP$ in the sense that $\Pol \neq \NP$ implies $\FP\neq \#\Pol$. Therefore, our results which assume $\FP \neq \#\Pol$ also hold if $\Pol \neq \NP$.

\section{Tractable Marginalization}
\label{sec:mar}

We define tractable marginalization in a natural way. For every function family $f=(f_n)$ with $f_n:\{0,1\}^n\to\mathbb{Q}$ we define a corresponding marginalization problem and then call $f$ tractable for marginalization if its marginalization problem is polynomial time computable.

\begin{definition}[$\MAR(f)$]\label{def:mar}
Let $f=(f_n)$ with $f_n:\{0,1\}^n\to\mathbb{Q}$.
The marginalization problem for $f$, denoted $\MAR(f)$, gets input $m\in \{0,1,*\}^n$ and outputs $\sum_{x\in X_{m}}f_n(x)$ where 
\[X_{m}=\{x\in\{0,1\}^n:x_i=m_i\text{ if }m_i\in\{0,1\}\}.\]
\end{definition}

Intuitively, $m_i\in \{0,1\}$ means that the $i$th input is fixed ($x_i=m_i$), and $m_i=*$ means the $i$th input is marginalized, taking values freely in $\{0,1\}$.%

\begin{example}[Marginalization]\label{exp:marginal}
Consider the function $f$ shown for $n=3$ in \cref{fig:example}.
Suppose we wish to marginalize over $X_2$ and $X_3$ with the evidence $X_1=0$. 
Specifically, on input $m = (0,*,*)$ to $\MAR(f)$, we get \[
X_m = \{000,010,001,011\}
\] \[\MAR(f)(m) = \sum_{v\in X_m}p(v) = 0.05 + 0.1 + 0.06 + 0.04 = 0.25.\] 
\end{example}

\begin{definition}[The Class $\PM$]
Let $f=(f_n)$
with $f_n:\{0,1\}^n\to\mathbb{Q}$.
Then $f\in \PM$ if $\MAR(f)$ is polynomial time computable.
\end{definition}

If $f\in \PM$, we say $f$ is tractable for variable marginalization. If $f$ has boolean codomain, then variable marginalization can be viewed as the task of model counting on $f$ after the substitution of variables given by any (partial) assignment. 
Of course, if a boolean function (family) $f$ is tractable for variable marginalization, then the function itself, $f(x)$, can be computed in polynomial time, by taking $m=x$.%

\begin{proposition}\label{prop:mar_subset_p} $\PM \subseteq \FP$. 
\end{proposition}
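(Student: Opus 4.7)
The plan is to reduce evaluation of $f$ to a trivial instance of $\MAR(f)$, namely one with no wildcards, and then invoke the assumed polynomial time algorithm for $\MAR(f)$.

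Concretely, given an input $x \in \{0,1\}^n$ to $f_n$, I would form the string $m \in \{0,1,*\}^n$ by setting $m_i = x_i$ for every $i \in [n]$; that is, $m$ contains no wildcard symbol. Then by \cref{def:mar} we have
\[
X_m = \{x' \in \{0,1\}^n : x'_i = m_i \text{ for all } i \in [n]\} = \{x\},
\]
so the marginalization problem degenerates to a single-term sum:
\[
\MAR(f)(m) \;=\; \sum_{x' \in X_m} f_n(x') \;=\; f_n(x).
\]
Hence computing $f_n(x)$ reduces in constant time (simply relabeling the input alphabet from $\{0,1\}$ to $\{0,1,*\}$) to one call to $\MAR(f)$. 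Since $f \in \PM$ means that $\MAR(f)$ is polynomial time computable, composing with this trivial reduction yields a polynomial time algorithm for $f$, so $f \in \FP$.

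There is essentially no obstacle here; the only thing to be slightly careful about is the mild bookkeeping regarding the encoding of inputs, namely that an $n$-bit string is interpreted as an element of $\{0,1,*\}^n$ in the obvious way, and that the encoding of rational outputs matches across the two problems. Both are standard and do not affect the polynomial time bound.
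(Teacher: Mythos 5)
Your proof is correct and takes exactly the paper's route: the paper also observes that evaluating $f_n(x)$ is just the query $\MAR(f)(m)$ with $m=x$ (no wildcards), so that $X_m=\{x\}$ and the sum collapses to $f_n(x)$. Your additional remarks on encoding are fine but not needed beyond what the paper treats as immediate.
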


Moreover, it is not hard to see that a function being tractable for marginalization is a much stronger condition than the function itself being tractable to compute. There exist many natural problems in $\FP \backslash \PM$. For example, if $\FP \neq \#\Pol$, then every $\#\Pol$-complete problem provides a function in $\FP \backslash \PM$.

\begin{proposition}\label{prop:p_minus_m}
If $\FP\neq \#\Pol$, then for every $\#\Pol$-complete function $f$, there is a function $g\in \FP \backslash\PM$.
\end{proposition}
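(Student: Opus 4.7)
The plan is to construct the desired $g$ directly from the $\#\Pol$-complete function $f$. Recall that every function in $\#\Pol$ can be written as a count of accepting witnesses of a polynomial-time verifier: there exist a polynomial $p$ and a polynomial-time computable $V\colon\{0,1\}^*\to\{0,1\}$ such that for every $w\in\{0,1\}^*$,
\[
f(w) \;=\; \bigl|\{y\in\{0,1\}^{p(|w|)} : V(w,y)=1\}\bigr|.
\]

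The first step is to turn $V$ into a $\{0,1\}$-valued function family $(g_n)$ with $g_n\colon\{0,1\}^n\to\mathbb{Q}$. I would fix a simple encoding that, for each $n$, earmarks $O(\log n)$ bits of the input to record the length $k=|w|$ and splits the remaining bits into $w$ of length $k$ and $y$ of length $p(k)$, padding as needed. Set $g_n(w,y)=V(w,y)$ when the parse is legal and $g_n=0$ otherwise. Since $V$ and the parsing are polynomial-time computable, $(g_n)\in\FP$.

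The next step is to show $g\notin\PM$ by reducing $f$ to $\MAR(g)$. Given any $w$, construct in polynomial time the marginalization query $m$ that fixes the "length" and "question" bits of $g_n$ to encode $w$ and puts $*$ on every bit of $y$, where $n$ is chosen minimally so the parse is valid. Then by \cref{def:mar},
\[
\MAR(g)(m) \;=\; \sum_{y\in\{0,1\}^{p(|w|)}} V(w,y) \;=\; f(w).
\]
Hence $\MAR(g)\in\FP$ would yield an $\FP$ algorithm for the $\#\Pol$-complete $f$ (composing the polynomial-time reduction with a polynomial-time algorithm for $\MAR(g)$), contradicting $\FP\neq\#\Pol$. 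Combined with $g\in\FP$ from the previous paragraph, this gives $g\in\FP\setminus\PM$.

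The only point that requires care is the encoding: one needs a uniform, polynomial-time-decidable partition of inputs of length $n$ into a "question" part and a "witness" part so that as $n$ varies, every $w\in\{0,1\}^*$ appears as the question part for some $n$ and is matched with exactly the $2^{p(|w|)}$ candidate witnesses. Any standard padding or pairing scheme makes this work, and once this bookkeeping is fixed, the rest of the argument is immediate from the definitions of $\FP$, $\MAR$, and $\#\Pol$-completeness.
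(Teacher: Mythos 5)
Your proof is correct and follows exactly the construction the paper intends: the paper gives no explicit proof, but its remark that the resulting functions $g$ are ``verifiers for $\NP$-complete problems'' (and verifiers for non-$\NP$-complete counting problems like the permanent) shows the intended $g$ is precisely the polynomial-time witness verifier for $f$, with marginalization over the witness bits recovering $f$. Your handling of the length/padding bookkeeping (fixing them in the query $m$ or zeroing illegal parses) is the right way to make the family $(g_n)$ fit \cref{def:mar}, and the rest is the standard completeness argument.
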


Examples of functions in $\FP \backslash \PM$ given by \cref{prop:p_minus_m} include verifiers for $\NP$-complete problems (e.g., satisfiability problems) as well as functions that do not correspond to $\NP$-complete problems (e.g., a suitably defined indicator function for perfect matchings).

\subsection{$\MAC$ has Tractable Marginalization}
Having shown functions in $\FP \backslash \PM$ (assuming $\FP\neq\#\Pol$), we now consider functions in $\PM$. Such functions include those expressible by the many existing tractable models that can be commonly viewed as uniform finally multilinear circuits. 
Efficient marginalization for functions in $\MAC$ follows from (i) the fact that FMACs can be evaluated at rational points in polynomial time and (ii) that to compute any marginalization of $f$ it suffices to evaluate the multilinear polynomial representation of $f$ at a single point. The former follows from \cref{lem:ufmac_eval}. The elegant latter fact follows from a very simple argument using linearity of expectation given below for completeness, which has been observed before \citep{juma2009black}.

\begin{lemma}\label{lem:mar_is_eval_pol}
Let $f:\{0,1\}^n\to\mathbb{Q}$ have multilinear polynomial $p$. Then,
\[\sum_{x\in \{0,1\}^n}f(x)
=2^np(1/2,\ldots,1/2).\]
\end{lemma}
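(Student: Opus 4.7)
The plan is to give a direct proof via the interpolation formula for the multilinear polynomial displayed earlier in the paper. Substituting $x_i = 1/2$ for all $i$ in
\[
p(x_1,\ldots,x_n)=\sum_{S\subseteq [n]}f(v_S)\prod_{i\in S}x_i\prod_{i\notin S}(1-x_i),
\]
every factor $x_i$ and every factor $1-x_i$ becomes $1/2$, so each summand contributes exactly $f(v_S)\cdot(1/2)^n$, independent of $|S|$. Pulling $(1/2)^n$ out of the sum yields $p(1/2,\ldots,1/2)=2^{-n}\sum_{S\subseteq[n]} f(v_S)$, and since the map $S\mapsto v_S$ is a bijection between $2^{[n]}$ and $\{0,1\}^n$, the sum on the right is exactly $\sum_{x\in\{0,1\}^n}f(x)$. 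Multiplying by $2^n$ gives the claim.

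I would also mention the alternative probabilistic proof (linearity of expectation) that the paper alludes to: let $X_1,\dots,X_n$ be i.i.d.\ uniform on $\{0,1\}$, so that $\mathbb{E}[f(X)] = 2^{-n}\sum_{x\in\{0,1\}^n} f(x)$. Since $p$ agrees with $f$ on $\{0,1\}^n$ we have $\mathbb{E}[f(X)] = \mathbb{E}[p(X)]$, and because $p$ is multilinear we can write $p = \sum_S c_S \prod_{i\in S} x_i$ and apply linearity together with independence, obtaining $\mathbb{E}[p(X)] = \sum_S c_S \prod_{i\in S}\mathbb{E}[X_i] = \sum_S c_S (1/2)^{|S|} = p(1/2,\ldots,1/2)$. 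Rearranging again gives the stated identity.

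There is essentially no obstacle here: both proofs are a one-step algebraic identity, and the key conceptual ingredient (namely that evaluation of the multilinear extension at the uniform point computes the expected value, and hence $2^{-n}$ times the total sum) is exactly what the lemma is designed to isolate. I would present the interpolation-based derivation as the main proof for transparency, and note the expectation viewpoint in a single sentence since it is the formulation that generalizes cleanly to the weighted/virtual-evidence settings treated later in \cref{sec:hmar,sec:vmar}.
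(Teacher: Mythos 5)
Your proposal is correct, and both of the arguments you give are valid; however, your primary proof is not the paper's route. The paper proves the lemma entirely by the probabilistic argument you mention second: it writes $\sum_{x\in\{0,1\}^n} p(x) = 2^n\,\mathbb{E}[p(X_1,\ldots,X_n)]$ for uniform $X$ and then pushes the expectation inside $p$ coordinate-wise to get $2^n p(1/2,\ldots,1/2)$. Your interpolation proof is more elementary and self-contained: it uses nothing beyond the displayed formula for the unique multilinear extension and the bijection $S\mapsto v_S$, so every step is a visible algebraic identity. What the paper's expectation proof buys instead is the conceptual framing (evaluation at the uniform point computes an expectation) that the paper immediately re-uses in the Fourier-coefficient remark following the lemma and, implicitly, in the virtual-evidence discussion of \cref{sec:vmar} -- exactly the generalization you anticipate. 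One point where your write-up is actually more careful than the paper's: the paper justifies the step $\mathbb{E}[p(X_1,\ldots,X_n)] = p(\mathbb{E}X_1,\ldots,\mathbb{E}X_n)$ by ``linearity of expectation'' alone, but as you make explicit, this step also needs independence of the coordinates, to convert the expectation of each monomial $\prod_{i\in S}X_i$ into $\prod_{i\in S}\mathbb{E}[X_i]$; linearity handles only the sum over monomials. Your formulation closes that small gap.
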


\begin{proof}
We have
\begin{align*}
\sum_{x\in \{0,1\}^n}p(x)&
=2^n\underset{x}{\mathbb{E}}p(x_1,\ldots,x_n)&&\\
&=2^np(\underset{x_1}{\mathbb{E}}x_1,\ldots,\underset{x_n}{\mathbb{E}}x_n)&&\\
&=2^np(1/2,\ldots,1/2)&&
\end{align*}
where the expectation is over the uniform distribution, and the middle equality follows by linearity of expectation.
\end{proof}

(The above lemma can also be viewed as extracting the first Fourier coefficient of $f$, by first translating the domain $\{0,1\}$ to $\{-1,1\}$ by $x\mapsto 1-2x$ and evaluating at zero.) With the ability to compute sums of a multilinear polynomial with a single evaluation,
we get that functions with uniform finally multilinear arithmetic circuits are tractable for marginalization.

\begin{proposition}\label{prop:FMAC_tractable}
$\MAC \subseteq \PM$. 
\end{proposition}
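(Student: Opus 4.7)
The plan is to combine the two ingredients just established: \cref{lem:ufmac_eval} (polynomial-time evaluation of UFMACs at rational points) and \cref{lem:mar_is_eval_pol} (marginalization as a single evaluation at the all-$1/2$ point). Let $f=(f_n)\in\MAC$ with UFMAC family $(C_n)$ computing the multilinear representation $p_n$ of $f_n$. Given input $m\in\{0,1,*\}^n$ to $\MAR(f)$, let $S=\{i\in[n]:m_i=*\}$ and $k=|S|$. I would describe the polynomial-time algorithm as follows: produce $C_n$ using the uniformity Turing machine, form the rational point $y\in\mathbb{Q}^n$ with $y_i=m_i$ when $m_i\in\{0,1\}$ and $y_i=1/2$ when $m_i=*$, invoke the evaluator from \cref{lem:ufmac_eval} to compute $C_n(y)$, and finally output $2^k\cdot C_n(y)$.

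The correctness step is where \cref{lem:mar_is_eval_pol} enters, but applied to the restriction $q(\{x_i\}_{i\in S}):=p_n(y')$ obtained from $p_n$ by substituting $x_i\mapsto m_i$ for $i\notin S$ and keeping $x_i$ as a variable for $i\in S$. Since $p_n$ is multilinear in all $n$ variables, the restriction $q$ is a multilinear polynomial in $k$ variables computing the function $x\mapsto f_n(x)$ on the cube $X_m$ (identified with $\{0,1\}^k$). Applying \cref{lem:mar_is_eval_pol} to $q$ gives
\[
\sum_{x\in X_m}f_n(x)=\sum_{x\in\{0,1\}^k}q(x)=2^k\,q(1/2,\ldots,1/2)=2^k\,p_n(y),
\]
so the output $2^k\cdot C_n(y)$ equals $\MAR(f)(m)$.

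For the running time, obtaining $C_n$ is polynomial in $n$ by uniformity, evaluating $C_n(y)$ on rational inputs of polynomial bit-length is polynomial by \cref{lem:ufmac_eval}, and multiplying by $2^k$ (a left shift) is linear in the bit-length of the already-polynomial-sized $C_n(y)$. Thus $\MAR(f)$ is in $\FP$, i.e.\ $f\in\PM$.

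The only genuine subtlety is confirming that substituting rationals for a subset of the variables in a multilinear polynomial yields a multilinear polynomial in the remaining variables computing the intended restricted function; this is immediate from the definition of multilinearity but is the one place the argument uses more than a black-box application of the two cited lemmas. Everything else is mechanical, so I do not anticipate a real obstacle.
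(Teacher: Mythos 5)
Your proposal is correct and takes essentially the same approach as the paper: both substitute the evidence values, invoke \cref{lem:mar_is_eval_pol} to reduce the marginal sum over the $*$-variables to a single evaluation at $1/2$, and then apply \cref{lem:ufmac_eval} for polynomial-time evaluation. The only cosmetic difference is that the paper performs the substitution inside the circuit before evaluating, whereas you evaluate the original circuit at a mixed rational point and make the $2^k$ normalization factor explicit (a detail the paper's terser proof leaves implicit).
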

\begin{proof}
Let $f\in \MAC$ be computed by the UFMAC $(C_n)$.
We need to provide a polynomial time Turing machine that computes $\MAR(f)$. 
On input $m\in \{0,1,*\}^n$, first obtain $C_n$ by uniformity of $f$. 
Then, for any $m_i\in\{0,1\}$, replace any node in $C_n$ labeled by $x_i$ with $m_i$. 
Note that multilinearity of the polynomial computed is preserved under this substitution, and all remaining variables now need to be marginalized.
By \cref{lem:mar_is_eval_pol}, it suffices to evaluate the remaining circuit with all inputs set to $1/2$, which takes polynomial time by \cref{lem:ufmac_eval}.
\end{proof}

\section{Hamming Weight Marginalization}\label{sec:hmar}

A key observation of our work is that UFMACs actually enable more powerful forms of marginalization. In this section we introduce the first such form, Hamming weight marginalization.
The Hamming weight of a binary string $x\in\{0,1\}^n$ is the number of ones in it.
Where standard marginalization asks for the sum of a function's values over all assignments with some subset of the inputs fixed,
Hamming weight marginalization asks for the sum of its values only over the inputs of a given Hamming weight. 
In the case that the underlying function represents a probability distribution, Hamming weight marginalization enables conditioning on the Hamming weight of the binary random variables, or on the size of the random set, if the variables are interpreted as the characteristic vector of a set. Algorithms for this problem are known in some special cases of UFMACs, for example in `structured' syntactically multilinear circuits \citep{vergari2021compositional} and in determinantal point processes \citep{kulesza2011k,calandriello2020sampling}.

\begin{definition}[$\HMAR(f)$ and $\PHM$]
Let $f=(f_n)$
with $f_n:\{0,1\}^n\to\mathbb{Q}$. The Hamming weight marginalization problem for $f$, denoted $\HMAR(f)$, takes input $m\in \{0,1,*\}^n$ and $k\in \{0,1,\ldots,n\}$ and gives output $\sum_{x\in X_{m,k}}f_n(x)$ where 
\[X_{m,k}=X_{m}\cap \{x\in \{0,1\}^n:|x|=k\}.\]
Moreover, $f\in \PHM$ if $\HMAR(f)$ is polynomial time computable.
\end{definition}

Intuitively, the input $m$ is the same as in $\MAR(f)$, and the additional input $k$ further restricts the $x\in X_{m,k}$ being summed over to those with Hamming weight $|x|=k$.
If $f\in \PHM$, we say $f$ is tractable for Hamming weight marginalization. 
\begin{example}[Hamming Weight Marginalization]\label{example:HMQ1}
    Recall the distribution shown in \cref{fig:example}. Suppose we want to sum over $X_2$ and $X_3$, fixing $X_1=0$, all strings of Hamming weight $k=1$. Specifically, we have input $m=0**$, $k=1$ to $\HMAR(f)$. From Example~\ref{exp:marginal}, $X_m = \{000,001,010,011\}$. So,
    \[
    X_{m,1} = \{010,001\}
    \] 
    \[\HMAR (f)(m,1)= \sum_{v\in X_{m,1}}p(v) = 0.1 + 0.06 = 0.16\] 
\end{example}

We first observe that a function tractable for Hamming weight marginalization, is also tractable for variable marginalization, since any variable marginalization query is the sum of $n+1$ Hamming weight marginalization queries for all $ k \in \{0,1,...,n\}$.

\begin{proposition}\label{prop:hmar_subset_mar}
$\PHM\subseteq \PM$. %
\end{proposition}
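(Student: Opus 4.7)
The plan is to reduce a variable marginalization query directly to a sum of Hamming weight marginalization queries, exactly as hinted in the sentence preceding the statement. Given $f \in \PHM$, I need to exhibit a polynomial time algorithm that, on input $m \in \{0,1,*\}^n$, computes $\MAR(f)(m) = \sum_{x \in X_m} f_n(x)$.

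The key observation is that the set $X_m$ can be partitioned by Hamming weight: writing $X_{m,k} = X_m \cap \{x \in \{0,1\}^n : |x| = k\}$ as in the definition of $\HMAR$, we have
\[
X_m = \bigsqcup_{k=0}^n X_{m,k},
\]
a disjoint union since each $x \in \{0,1\}^n$ has a unique Hamming weight. Summing $f_n$ over both sides therefore gives
\[
\MAR(f)(m) = \sum_{x \in X_m} f_n(x) = \sum_{k=0}^n \sum_{x \in X_{m,k}} f_n(x) = \sum_{k=0}^n \HMAR(f)(m,k).
\]

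The algorithm for $\MAR(f)$ is then immediate: on input $m$, call the assumed polynomial time procedure for $\HMAR(f)$ on each of the $n+1$ inputs $(m,0), (m,1), \ldots, (m,n)$ and return their sum. This requires a polynomial number of calls, each to a polynomial time subroutine, producing outputs whose bit lengths are polynomially bounded (so the final additions are also polynomial time). Hence $\MAR(f)$ is polynomial time computable and $f \in \PM$.

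There is no real obstacle here; the only thing to be slightly careful about is ensuring that the $n+1$ rational outputs of $\HMAR(f)$ have polynomially bounded size so that summing them stays within polynomial time, but this is automatic since each is produced by a polynomial time machine. No additional assumptions (e.g., $\FP \neq \#\Pol$) are needed; the inclusion is unconditional.
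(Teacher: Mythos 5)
Your proof is correct and matches the paper's argument exactly: the paper justifies this inclusion with the same observation, namely that any variable marginalization query equals the sum of the $n+1$ Hamming weight marginalization queries $\HMAR(f)(m,k)$ for $k \in \{0,1,\ldots,n\}$. Your additional remark about the polynomially bounded output sizes is a fine (if routine) bit of bookkeeping that the paper leaves implicit.
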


We now show that UFMACs support tractable Hamming weight marginalization. To do so, we use a useful circuit transformation. For any multilinear polynomial $p(x_1,\ldots,x_n)=\sum_{S\subseteq [n]}c_S\prod_{i\in S}x_i$ with $c_S\in \mathbb{F}$, the \textit{network} polynomial for $p$ is the multilinear polynomial 
\[\bar{p}(x_1,\bar{x}_1,\ldots,x_n,\bar{x}_n)=\sum_{S\subseteq [n]}p(v_S)\prod_{i\in S}x_i\prod_{i\notin S}\bar{x}_i\] 
where $v_S\in\{0,1\}^n$ is the characteristic vector of $S$. There is a polynomial time algorithm for transforming an arithmetic circuit computing a multilinear polynomial to a circuit computing its network form.%

\begin{lemma}[\citet{BroadrickUAI24}]\label{lem:network}
Given an arithmetic circuit of size $s$ computing polynomial $p$ in $n$ variables, an arithmetic circuit computing the network polynomial for $p$ can be constructed in time $O(sn)$.
\end{lemma}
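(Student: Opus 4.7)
My plan is to give an explicit circuit transformation taking $C$ to a circuit $C'$ computing $\bar{p}$. The starting point is the reformulation
\[
\bar{p}(x_1,\bar{x}_1,\ldots,x_n,\bar{x}_n)=\sum_T c_T\prod_{i\in T}x_i\prod_{i\notin T}(x_i+\bar{x}_i),
\]
where $p(x)=\sum_T c_T\prod_{i\in T}x_i$ is the multilinear expansion of $p$. I would derive this by plugging $p(v_S)=\sum_{T\subseteq S}c_T$ into the definition of $\bar{p}$ and swapping the order of summation. Read circuit-wise, the identity says that $\bar{p}$ is obtained from $p$'s monomial expansion by extending every monomial's scope to all of $[n]$, with $(x_j+\bar{x}_j)$ playing the role of a missing variable $j$.

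The transformation will then proceed in three steps. First, a single bottom-up pass over $C$, running in $O(sn)$ using bit-vector representations of scopes, annotates each node $v$ with its variable scope $V_v\subseteq[n]$. Second, I will build once and for all the $n$ sum gates computing $(x_i+\bar{x}_i)$. Third, at each sum node $v=v_1+v_2$ of $C$, I will replace each child $v_k$ by $v_k\cdot\prod_{j\in V_v\setminus V_{v_k}}(x_j+\bar{x}_j)$, so that the two addends are smoothed up to the combined scope $V_v$ of their sum. This is the analogue of the standard smoothing step for probabilistic circuits, but with $(x_j+\bar{x}_j)$ in place of $1$. Each sum node contributes at most $n$ extra product gates, so the whole construction runs in time and size $O(sn)$. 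Correctness then follows by a structural induction over $C$: at a decomposable product node with disjoint child scopes one has $\overline{p_1 p_2}=\bar p_1\bar p_2$, while at a sum node the smoothing ensures that both addends already live in the combined scope, so the network polynomials simply add.

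\textbf{Main obstacle.} The argument above is cleanest when every product gate of $C$ has children with disjoint variable scopes, as is automatic for syntactically multilinear circuits. For a general arithmetic circuit computing a multilinear $p$ through non-decomposable product gates or non-multilinear intermediate polynomials, naively applying smoothing to a sibling of a non-decomposable product can change what the product itself computes (for example, an intermediate $x_i\cdot x_i$ becoming $x_i\cdot(x_i+\bar{x}_i)$ after smoothing). Carrying out the construction for arbitrary circuits while keeping the size overhead linear in $n$ is, I expect, the main technical content of the cited result of \citet{BroadrickUAI24}.
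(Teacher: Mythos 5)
Your starting identity $\bar{p}=\sum_T c_T\prod_{i\in T}x_i\prod_{i\notin T}(x_i+\bar{x}_i)$ is correct, and the smoothing construction you build on it is a sound $O(sn)$ transformation---but only for circuits in which every product gate has children over disjoint sets of variables, i.e., syntactically multilinear circuits (modulo small omissions: filling in $\prod_{j\notin V_{\text{root}}}(x_j+\bar{x}_j)$ at the output when the root scope is a proper subset of $[n]$). That restricted statement is not \cref{lem:network}. The lemma---which this paper does not prove itself but imports from \citet{BroadrickUAI24}---is stated for an \emph{arbitrary} arithmetic circuit computing a multilinear polynomial, and it is used in exactly that generality: in the proof of \cref{prop:FMAC_ham} it is applied to finally multilinear circuits, whose internal gates may compute non-multilinear polynomials and whose product gates need not be decomposable. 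That generality is the entire point of the class $\MAC$: its motivating examples (determinantal point processes, probabilistic generating circuits) are precisely models whose known small circuits are \emph{not} syntactically multilinear \citep{kulesza2012determinantal,zhang2021probabilistic}. With only your version of the lemma, the paper's argument would establish $\USMAC\subseteq\PHM$ but not $\MAC\subseteq\PHM$, and the placement of DPPs and PGCs inside $\PHM$ in \cref{fig:main} would be unsupported. So the obstacle you flag in your final paragraph is not a peripheral technicality that can be delegated back to the citation---it \emph{is} the statement to be proved, and your proposal does not overcome it.

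To close the gap one must avoid any appeal to gate-level scope structure. A standard route, consistent with your identity, is to note that $\bar{p}(x,\bar{x})=\bigl(\prod_i(x_i+\bar{x}_i)\bigr)\,p\bigl(x_1/(x_1+\bar{x}_1),\ldots,x_n/(x_n+\bar{x}_n)\bigr)$ as rational functions, perform this substitution at the leaves of the given circuit, and propagate numerator/denominator pairs through the circuit (a product of fractions and a sum of fractions each cost $O(1)$ extra gates, so this is a constant-factor blowup); one is then left with a single division at the output, which can be eliminated by Strassen's technique \citep{StrassenDiv}, truncating at degree $n$---a bound available precisely because $p$ is multilinear, so $\bar{p}$ has total degree $n$. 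This is the same device this paper uses to handle divisions in the real RAM setting (\cref{prop:bss_wmar}). Note that division elimination works for arbitrary circuits and never inspects scopes, which is what makes the general statement true; your smoothing argument, by contrast, breaks exactly where you say it does (e.g., an intermediate $x_i\cdot x_i$), and no local repair of it is known.
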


With the ability to efficiently transform UFMACs to compute their network polynomials, we can efficiently perform Hamming weight marginalization for UFMACs.

\begin{proposition}\label{prop:FMAC_ham} 
$\MAC \subseteq \PHM$. 
\end{proposition}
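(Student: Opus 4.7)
The plan is to reduce Hamming weight marginalization to extracting a single coefficient from a univariate polynomial, using the network polynomial transformation. Given a UFMAC $(C_n)$ computing the multilinear polynomial $p$ of $f_n$, I would first apply \cref{lem:network} to obtain a polynomial-size arithmetic circuit family $(\bar{C}_n)$ computing the network polynomial $\bar{p}(x_1,\bar{x}_1,\ldots,x_n,\bar{x}_n)$. This family remains uniform (constructed in polynomial time from $(C_n)$) and finally multilinear, so by \cref{lem:ufmac_eval} it can be evaluated at rational points in polynomial time.

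Given input $m \in \{0,1,*\}^n$ and $k$, let $I_b = \{i : m_i = b\}$ for $b \in \{0,1,*\}$. I would introduce a fresh variable $y$ and substitute into $\bar{p}$ as follows: for $i \in I_1$, set $x_i = y$ and $\bar{x}_i = 0$; for $i \in I_0$, set $x_i = 0$ and $\bar{x}_i = 1$; and for $i \in I_*$, set $x_i = y$ and $\bar{x}_i = 1$. Under these substitutions, the term $p(v_S)\prod_{i\in S}x_i\prod_{i\notin S}\bar{x}_i$ appearing in $\bar{p}$ vanishes unless $I_1 \subseteq S$ and $S \cap I_0 = \emptyset$, and otherwise becomes $p(v_S)\, y^{|S|}$. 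Thus the resulting univariate polynomial is
\[
q(y) \;=\; \sum_{\substack{S \supseteq I_1 \\ S \cap I_0 = \emptyset}} p(v_S)\, y^{|S|},
\]
so the coefficient of $y^k$ in $q$ equals $\sum_{x \in X_{m,k}} f_n(x) = \HMAR(f)(m,k)$.

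Since $\deg q \le n$, I would then evaluate $q$ at $n+1$ distinct rational points (say $y = 0, 1, \ldots, n$); each evaluation is just an evaluation of $\bar{C}_n$ at a rational point, which takes polynomial time by \cref{lem:ufmac_eval}. Finally, I would recover the coefficient of $y^k$ by Lagrange interpolation (or equivalently by inverting the $(n+1)\times(n+1)$ Vandermonde matrix), which is polynomial-time rational arithmetic. Composing these three stages gives a polynomial-time algorithm for $\HMAR(f)$, establishing $f \in \PHM$.

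The main obstacle is mostly conceptual rather than technical: one must identify the correct substitution that simultaneously encodes the marginalization pattern $m$ \emph{and} uses the bookkeeping variable $y$ to track the Hamming weight, so that a single coefficient extraction yields the answer. Once that substitution is in hand, polynomial-time evaluability follows immediately from \cref{lem:ufmac_eval} (every variable is set to a rational), and interpolation at the chosen small-bit-complexity nodes keeps all intermediate rationals polynomially sized.
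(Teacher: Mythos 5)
Your proposal is correct and follows essentially the same route as the paper's proof: transform the UFMAC into a circuit for the network polynomial via \cref{lem:network}, substitute $0$/$1$'s and a single weight-tracking indeterminate according to $m$ (your substitution is identical to the paper's), and read off the coefficient of degree $k$. The only difference is the final step — the paper evaluates the circuit symbolically in the indeterminate $t$ to produce the univariate polynomial directly, while you recover the coefficient by evaluating at $n+1$ rational points and interpolating; this is a minor (and arguably more explicitly justified) variant, and it is exactly the interpolation trick the paper itself uses later in the proof of \cref{prop:vmar_in_hmar}.
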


\begin{proof} 
Let $f\in \MAC$. We need to provide a polynomial time Turing machine that computes $\HMAR (f)$. 
On input $m\in \{0,1,*\}^n$, $k\in \{0,1,\ldots,n\}$, first obtain a FMAC that computes $f_n$ by uniformity of $f$. Using \cref{lem:network}, obtain a FMAC computing its network polynomial $\bar{p}(x,\bar{x})$. We will now evaluate $\bar{p}$ as follows. For each $i$, if $m_i=0$, set $x_i=0$ and $\bar{x}_i=1$; if $m_i=1$, set $x_i=t$ and $\bar{x}_i=0$; if $m_i=*$, set $x_i=t$ and $\bar{x}_i=1$.
Note that $t$ is a single indeterminate/symbol used for all such $x_i$. 
Evaluating the circuit produces a univariate polynomial in $t$. 
The coefficient of the monomial of degree $k$ is the desired quantity. 
To see this, consider some monomial $p(v_S)\prod_{i\in S}x_i\prod_{i\notin S}\bar{x}_i$ in $\bar{p}$. 
The monomial evaluates to zero if $v_S$ does not agree with $m$ (i.e., if there is some $m_i=1$ but $i\notin S$ or $m_i=0$ but $i\in S$). 
Moreover, the degree of the monomial after the substitution is $|S|$, the Hamming weight of $v_S$, as needed.
\end{proof}

\begin{example}[Hamming Weight Marginalization]\label{example:HMQ2}
    Recall Example~\ref{example:HMQ1}. We summed over $X_2$ and $X_3$ with $X_1=0$, only strings of Hamming weight equal to $1$. Following the proof of \cref{prop:FMAC_ham}, we compute 
   \[
    \bar{p}(0,1,t,1,t,1) = 0.05 + 0.1t + 0.06t + 0.04t^2,
    \]
    and extract the coefficient of $t^k=t$.
    Hence, $\HMAR (f)(m,1) = 0.16$, as in Example~\ref{example:HMQ1}.
\end{example}

\subsection{Separating $\PHM$ from $\PM$ with CSPs}

Our construction of a function that is tractable for marginals but $\#\Pol$-hard for Hamming queries exploits known dichotomy theorems for constraint satisfaction problems (CSPs).
Constraint satisfaction problems concern conjunctions of \emph{constraints}, each an application of some boolean relation $R\subseteq \{0,1\}^k$ to variables $x_1,\ldots,x_k$ (not necessarily distinct), i.e., $R(x_1,\ldots,x_k)=1$ if and only if $(x_1,\ldots,x_k)\in R$. 
For some finite set $\Gamma$ of boolean relations (called a \emph{constraint language}), a $\Gamma$-formula is a conjunction of constraints each an application of some relation in $\Gamma$. 
The constraint satisfaction problem $\csp(\Gamma)$ gives a $\Gamma$-formula $\phi$ as input and asks whether there is a satisfying assignment for $\phi$. 
The problem $k$-$\ones(\Gamma)$ gives a $\Gamma$-formula $\phi$ and an integer $k$ as input and asks whether there is a satisfying assignment for $\phi$ with exactly $k$ ones. The counting versions of these problems \#$\csp(\Gamma)$ and \#$k$-$\ones(\Gamma)$ ask for the number of satisfying assignments (in total and with $k$ ones respectively). Clearly, the complexity of these problems depends on the set of relations $\Gamma$.

There are elegant dichotomy theorems for these problems, stating that, depending on $\Gamma$, these problems are either polynomial time computable or $\#\Pol$-hard. We say a relation is affine if it is logically equivalent to a system of linear equations over $\text{GF}(2)$, the finite field containing two elements. 

\begin{theorem}[\citet{creignou1996_counting}]\label{thm:creignou_counting}
If $\Gamma$ contains only affine relations, then \#$\csp(\Gamma)$ is computable in polynomial time. Otherwise, \#$\csp(\Gamma)$ is \#P-complete.
\end{theorem}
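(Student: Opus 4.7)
The plan is to prove the dichotomy by handling the two halves separately. For the tractable half, I would reduce $\#\csp(\Gamma)$ directly to counting solutions of a system of linear equations over $\text{GF}(2)$. By hypothesis each $R \in \Gamma$ is the solution set of some fixed linear system, so an instance with $n$ variables consisting of $m$ constraints translates in polynomial time into a single linear system $Ax = b$ over $\text{GF}(2)$; Gaussian elimination then computes $\mathrm{rank}(A)$ and checks consistency, and the answer is $2^{n - \mathrm{rank}(A)}$ if $b$ lies in the column space of $A$ and $0$ otherwise.

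For the hard half, suppose $\Gamma$ contains a non-affine relation $R_0$. The goal is to reduce some known $\#\Pol$-complete counting problem (for instance $\#3\text{-SAT}$, or $\#\csp$ over a small fixed hard language already known to be complete) to $\#\csp(\Gamma)$. The overall strategy has three steps: (i) possibly adjoin the unary constant relations $\{(0)\}$ and $\{(1)\}$, which are themselves affine and so preserve the statement of the tractable case; (ii) build gadgets that express richer relations via primitive-positive definitions (conjunction together with existential quantification) over $\Gamma$; and (iii) make these gadgets \emph{count-preserving}, meaning each satisfying external assignment is extended by exactly $c$ internal assignments for a known constant $c$, so that the reduction multiplies counts by a controllable factor which can be divided out at the end.

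The main obstacle is step (ii): starting from an arbitrary non-affine $R_0$, produce a gadget that implements a provably hard primitive. A clean organizing principle is the algebraic characterization that a Boolean relation is affine iff it is invariant under the ternary minority operation $m(x,y,z) = x \oplus y \oplus z$. Non-affineness of $R_0$ therefore yields a concrete witness: a triple $(u,v,w) \in R_0^{\,3}$ with $m(u,v,w) \notin R_0$. A finite case analysis on the structure of such a witness, combined with adjoined constants used to freeze particular coordinates, then produces a primitive-positive definition of a useful relation such as a binary clause or an implication, which suffices for the target reduction. I expect this case analysis and the verification that each gadget is count-preserving to be the longest and most delicate portion of the argument; the tractable direction and the high-level reduction framework are essentially routine.
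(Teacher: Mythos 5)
First, for context: the paper does not prove this statement at all --- \cref{thm:creignou_counting} is imported as a black box from \citet{creignou1996_counting}, so there is no in-paper proof to compare against; your proposal is an attempt to reprove a substantial external theorem. (The only piece the paper re-derives is the easy tractable direction, reused in the argument that $f_{\text{aff}}\in\PM$, \cref{prop:faff_in_pm}.) Your tractable half is correct and standard: each affine constraint unfolds into linear equations over $\text{GF}(2)$, the instance becomes a single system $Ax=b$, and Gaussian elimination gives the count $2^{n-\mathrm{rank}(A)}$ or $0$.

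The hard half, however, has a genuine gap at step (i). Adjoining the unary constant relations changes the problem: proving that $\#\csp(\Gamma\cup\{\{(0)\},\{(1)\}\})$ is $\#\Pol$-hard does not prove that $\#\csp(\Gamma)$ is $\#\Pol$-hard, which is what the theorem asserts for every non-affine $\Gamma$. Your justification --- that the constants are themselves affine and so ``preserve the statement of the tractable case'' --- addresses the wrong direction entirely; the issue is not whether the tractable characterization survives the augmentation, but whether hardness of the augmented language transfers back to the original one. That transfer is a theorem in its own right: one must either perfectly (count-preservingly) implement the constants from the non-affine relation, or give a polynomial-time Turing reduction eliminating them (e.g., via interpolation over multiple oracle calls), and this is one of the delicate ingredients of the Creignou--Hermann proof and of later counting-CSP dichotomies. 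As written, your reduction chain establishes hardness of the wrong problem. The remaining ingredients of your plan are sound in outline --- affineness is equivalent to closure under the minority operation $m(x,y,z)=x\oplus y\oplus z$, a witness triple plus case analysis yields a count-preserving implementation of OR, NAND, or implication, and the corresponding restricted \#2SAT problems are $\#\Pol$-complete --- but these, together with the constants lemma, \emph{are} the content of the cited theorem, so deferring them as ``delicate case analysis'' leaves the proposal as an outline of the known proof rather than a proof.
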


To the extent that \#$\csp(\Gamma)$ is a CSP analogue to variable marginalization, \#$k$-$\ones(\Gamma)$ is an analogue to Hamming weight marginalization. There is a similar dichotomy, but with fewer tractable cases, for \#$k$-$\ones(\Gamma)$. We say a relation is width-$k$ affine if it is logically equivalent to a system of linear equations over $\text{GF}(2)$ with each equation having at most $k$ variables with a nonzero coefficient.

\begin{theorem}[\citet{creignou2010_cardinality}]\label{thm:kones_creignou}
If $\Gamma$ contains only width-2 affine relations, then \#$k$-$\ones(\Gamma)$ is computable in polynomial time. Otherwise, \#$k$-$\ones(\Gamma)$ is \#P-complete.
\end{theorem}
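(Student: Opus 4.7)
The plan is to prove the two directions of the dichotomy separately, following the standard CSP template adapted to preserve Hamming weight exactly.

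For the tractable direction, assume every $R\in\Gamma$ is width-$2$ affine. Then each constraint decomposes into a conjunction of equations of the form $x_i=c$, $x_i=x_j$, or $x_i\ne x_j$ over $\text{GF}(2)$. I would build the signed graph on the variables whose edges are the binary equations, compute its connected components, and propagate signs together with any unary constraints. Each component either exposes a contradiction (contributing $0$), is forced to a single assignment (contributing a fixed number $n^C$ of ones), or leaves exactly two dual satisfying assignments contributing $n^C_0$ and $n^C_1=|C|-n^C_0$ ones. The number of satisfying assignments with exactly $k$ ones is then the coefficient of $t^k$ in
\[
\prod_{C\text{ fixed}} t^{n^C} \; \prod_{C\text{ free}} \bigl(t^{n^C_0}+t^{n^C_1}\bigr),
\]
a polynomial of degree at most $n$ that is computed by iterated polynomial multiplication in polynomial time.

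For the hard direction, let $R\in\Gamma$ be a relation that is not width-$2$ affine. I would construct a parsimonious reduction to $\#k$-$\ones(\Gamma)$ from a known $\#\Pol$-hard cardinality-sensitive counting problem, such as $\#$Perfect Matching on bipartite graphs or $\#k$-$\ones$ for a single canonical non-width-$2$-affine relation (for instance $1$-in-$3$-SAT). The crucial restriction is that every gadget must use only conjunctions of applications of relations in $\Gamma$ \emph{without} existentially quantified auxiliary variables, since such variables would alter the Hamming weight. A Schaefer-style case analysis on the failure mode of $R$ (not affine at all, or affine but containing some equation of width at least $3$), combined with fixing selected arguments of $R$ to constants that $\Gamma$ allows, yields the required weight-preserving gadget directly from $R$ itself; an overall additive shift in the Hamming weight induced by forced constants can be inverted by querying the target problem at $k' = k + \text{shift}$.

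The main obstacle is precisely the requirement that every gadget preserve Hamming weight. In the usual $\#\csp$ dichotomy one is free to existentially quantify auxiliary variables, which makes many relations interexpressible; here, auxiliary variables change the cardinality and so cannot be freely eliminated. This forces a much finer classification of relations and careful bookkeeping of how many ones each gadget contributes, which is exactly why the class of tractable constraint languages shrinks from ``all affine'' in \cref{thm:creignou_counting} to ``width-$2$ affine'' here, and why designing the correct weight-preserving gadgets in the hard direction is the delicate part of the argument.
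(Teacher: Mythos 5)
First, a point of reference: the paper does not prove this statement at all --- it is imported verbatim from \citet{creignou2010_cardinality}, so your attempt can only be judged against the literature, not against an in-paper argument. Your tractable direction is essentially correct and standard: a width-2 affine $\Gamma$-formula decomposes into unary and binary parity constraints, each connected component of the resulting signed graph admits zero, one, or two satisfying assignments (the two being complements of one another within the component), and the coefficient of $t^k$ in the product
\[
\prod_{C\text{ fixed}} t^{n^C}\prod_{C\text{ free}}\bigl(t^{n_0^C}+t^{|C|-n_0^C}\bigr)
\]
is computable by iterated polynomial multiplication in polynomial time.

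The hard direction, however, has a genuine gap, and it is where the entire content of the theorem lives. Your plan is a gadget-based, parsimonious, weight-preserving reduction from a canonical hard problem such as $\#k$-$\ones$ of the 1-in-3 relation, obtained by ``fixing selected arguments of $R$ to constants.'' This route cannot work in precisely the case the present paper relies on: $\Gamma=\{x\oplus y\oplus z\}$, which is affine but of width $3$. Any conjunction of affine constraints --- with or without fixed constants, with or without auxiliary variables, before or after projection --- has an affine solution set, so no gadget formula over an affine $\Gamma$ can express a non-affine relation like 1-in-3, and no structure-preserving reduction from a non-affine language exists. For affine $\Gamma$ of width at least $3$, the hardness is invisible at the level of solution sets (indeed $\#\csp(\Gamma)\in\FP$ by \cref{thm:creignou_counting}) and resides entirely in the weight profile: counting weight-$k$ solutions of a width-3 parity system amounts to evaluating the weight enumerator of a binary linear code, and the known $\#\Pol$-hardness proofs for such problems go through polynomial interpolation (Turing reductions), not parsimonious gadgets. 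Relatedly, your stated ``crucial restriction'' that auxiliary variables must be banned is too strong: auxiliary variables are admissible whenever their contribution to the Hamming weight is constant or otherwise controllable --- the paper's own reduction in \cref{thm:sep_hmar_from_mar} does exactly this, pairing each $y_{ijk}$ with a balancing $z_{ijk}$ so that $|\mathbf{y}|+|\mathbf{z}|=n^3$ on every satisfying assignment. So the two load-bearing choices in your plan (parsimony, no auxiliaries) are respectively impossible and unnecessary, and the case analysis you defer to ``Schaefer-style'' reasoning is exactly the delicate work that \citet{creignou2010_cardinality} carry out.
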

Leveraging these results from the CSP literature, we now construct a simple function for which standard variable marginalization is tractable but Hamming weight marginalization is $\#\Pol$-hard; in other words, we find a function tractable for marginals for which the existence of a UFMAC would imply $\FP = \#\Pol$. Consider the function $f_{\text{aff}}:\{0,1\}^{2n^3+n}\to\{0,1\}$ in variables $(x_i)_{i\in [n]}$, $(y_{ijk})_{i,j,k\in [n]^3}$, and $(z_{ijk})_{i,j,k\in [n]^3}$ abbreviated to $\mathbf{x,y,z}$:
\begin{align*}
f_{\text{aff}}(\mathbf{x,y,z}) =& \bigwedge_{i,j,k\in [n]^3}y_{ijk}\oplus x_i\oplus x_j \oplus x_k && \\
&\wedge \bigwedge_{i,j,k\in [n]^3} y_{ijk}\oplus z_{ijk}. && \stepcounter{equation}\tag{\theequation}\label{eq:f}
\end{align*}
We first observe that marginalization of $f_{\text{aff}}$ is computable in polynomial time, mirroring a tractable case of \cref{thm:creignou_counting}.
After substituting in any partial assignment, $f_{\text{aff}}$ specifies a linear system over $\text{GF}(2)$. Performing Gaussian elimination reveals $k$ linearly independent equations. If $k\ge n$, then there are no satisfying assignments. If $k<n$, then there are $2^{n-k}$ satisfying assignments, as each affine equation (parity constraint) independently halves the number of solutions.

\begin{proposition}\label{prop:faff_in_pm}
$f_{\text{aff}}\in\PM$.
\end{proposition}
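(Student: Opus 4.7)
The plan is to reduce marginalization of $f_{\text{aff}}$ to counting solutions of a linear system over $\text{GF}(2)$, a task well-known to lie in $\FP$ via Gaussian elimination. The starting observation is that $f_{\text{aff}}(\mathbf{x,y,z})=1$ precisely when every XOR clause evaluates to $1$, i.e., when $(\mathbf{x,y,z})$ satisfies the affine system consisting of the equations $y_{ijk} + x_i + x_j + x_k \equiv 1 \pmod{2}$ and $y_{ijk} + z_{ijk} \equiv 1 \pmod{2}$ for all $i,j,k \in [n]^3$. Since $f_{\text{aff}}$ is $\{0,1\}$-valued, $\MAR(f_{\text{aff}})(m)$ is exactly the number of satisfying assignments that agree with the partial assignment $m$.

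Given input $m \in \{0,1,*\}^{2n^3+n}$, the first step is to substitute each fixed value ($m_v \in \{0,1\}$) into the system. Each equation then collapses into one of three cases: (i) a trivially satisfied equation, which is discarded; (ii) a contradiction of the form $0 \equiv 1$, in which case the algorithm immediately outputs $0$; or (iii) a genuine linear equation in the free variables. This yields a restricted system $A'\mathbf{v}' = \mathbf{b}'$ over $\text{GF}(2)$ in $n'$ unknowns, whose size is polynomial in $n$.

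Next, Gaussian elimination over $\text{GF}(2)$ is applied to $A' \mid \mathbf{b}'$, producing the rank $r$ and a consistency check. If the reduced system is inconsistent, output $0$; otherwise output $2^{n'-r}$, relying on the standard fact that a consistent $\text{GF}(2)$ system on $n'$ unknowns with coefficient matrix of rank $r$ has exactly $2^{n'-r}$ solutions. Both the elimination and the final exponentiation (a number of bit-length at most $n'+1$) run in polynomial time, so the entire procedure is polynomial time and $f_{\text{aff}} \in \PM$.

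There is no real obstacle to overcome here: the argument is essentially the tractable case of the Creignou--Hermann dichotomy (\cref{thm:creignou_counting}) specialized to the constraint language generated by width-$4$ parity relations. The only small bookkeeping concern is correctly updating the right-hand side of each equation during the substitution step so that the remaining system genuinely describes the fiber of $f_{\text{aff}}$ over $m$; this is routine.
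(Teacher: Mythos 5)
Your proof is correct and takes essentially the same approach as the paper: after substituting the evidence $m$, view $f_{\text{aff}}$ as an affine system over $\text{GF}(2)$, run Gaussian elimination, and output $2^{n'-r}$ solutions (or $0$ if inconsistent). If anything, your write-up is more careful than the paper's terse argument, which glosses over the explicit consistency check that you spell out.
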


To see that Hamming weight marginalization on $f_{\text{aff}}$ is hard, we reduce from $\#k$-$\ones(\Gamma)$ for a chosen $\Gamma$ consisting of only width-three affine relations. 
The reduction encodes a width-three formula as evidence (a partial assignment) to 
$f_{\text{aff}}$, maintaining a relationship between the Hamming weight of the original formula and the resulting query. Intuitively, this may work because of the additional width of $f_{\text{aff}}$, though we do not rule out the possibility of a lower-width function also working.
In the reduction, the variables $y_{ijk}$ are used to `turn on' and `turn off' the width-3 equations in variables $x_i$ as needed. 
The variables $z_{ijk}$ and the constraints $y_{ijk}\oplus z_{ijk}$ are then used to `balance' the Hamming weight of satisfying assignments since exactly one of $y_{ijk}$ and $ z_{ijk}$ will be one.

\begin{theorem}\label{thm:sep_hmar_from_mar}
$\HMAR(f_{\text{aff}})$ is $\#\Pol$-hard. That is, $f_{\text{aff}}\notin \PHM$ unless $\FP=\#\Pol$.
\end{theorem}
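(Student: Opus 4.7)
The plan is a polynomial-time parsimonious reduction from \#$k$-$\ones(\{R_0\})$ to $\HMAR(f_{\text{aff}})$, where $R_0:=\{(a,b,d)\in\{0,1\}^3 : a\oplus b\oplus d = 0\}$. Since $R_0$ is a ternary affine relation whose defining equation $a\oplus b\oplus d=0$ is not logically equivalent to any system of width-2 affine equations, \cref{thm:kones_creignou} already yields that \#$k$-$\ones(\{R_0\})$ is $\#\Pol$-complete.

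Given a $\{R_0\}$-formula $\phi$ on variables $x_1,\ldots,x_{n_\phi}$ with constraints $C_1,\ldots,C_m$---each of the form $x_{i_t}\oplus x_{j_t}\oplus x_{k_t}=0$, deduplicated so that distinct constraints carry distinct ordered triples $(i_t,j_t,k_t)$---and a target Hamming weight $k$, I will set $N:=n_\phi$ and construct the input $(\mu,k')$ to $\HMAR(f_{\text{aff}})$ as follows. For every $t\in[m]$, fix $\mu_{y_{i_tj_tk_t}}:=1$; leave every other $y_{ijk}$, every $z_{ijk}$, and every $x_i$ free ($*$); and set $k':=k+N^3$.

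For correctness I will verify two things. First, the satisfying assignments $(\mathbf{x},\mathbf{y},\mathbf{z})$ of $f_{\text{aff}}$ consistent with $\mu$ are in bijection with the satisfying assignments $\mathbf{x}$ of $\phi$: the $f_{\text{aff}}$-clauses force $y_{ijk}=1\oplus x_i\oplus x_j\oplus x_k$ and $z_{ijk}=1\oplus y_{ijk}$ for every triple (so $\mathbf{y},\mathbf{z}$ are determined by $\mathbf{x}$), and the evidence $\mu_{y_{i_tj_tk_t}}=1$ is equivalent to requiring $x_{i_t}\oplus x_{j_t}\oplus x_{k_t}=0$, matching the CSP constraints exactly. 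Second, because $y_{ijk}\oplus z_{ijk}=1$ is enforced for every triple, exactly one of $(y_{ijk},z_{ijk})$ equals $1$, so $|\mathbf{y}|+|\mathbf{z}|=N^3$ always, independent of $\mathbf{x}$. The total Hamming weight of any satisfying triple is therefore $|\mathbf{x}|+N^3$, so $\HMAR(f_{\text{aff}})(\mu,k+N^3)$ picks out exactly the $\phi$-assignments of weight $k$, equalling the \#$k$-$\ones(\{R_0\})$ count on $\phi$.

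The main subtlety, rather than a genuine obstacle, is confirming that ``inactive'' triples (those whose $y_{ijk}$ is unfixed, either because the ordered triple indexes no CSP constraint or because it has repeated indices) do not disturb the count. Such triples still have $y_{ijk}$ uniquely determined by $\mathbf{x}$ via the first clause and $z_{ijk}$ determined by $y_{ijk}$ via the second, imposing no new restriction on $\mathbf{x}$ while still contributing exactly one to the Hamming weight. Once this bookkeeping is done, the reduction is clearly polynomial-time and parsimonious, so $\#\Pol$-hardness transfers from \#$k$-$\ones(\{R_0\})$ to $\HMAR(f_{\text{aff}})$, yielding $f_{\text{aff}}\notin\PHM$ unless $\FP=\#\Pol$.
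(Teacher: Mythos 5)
Your proof is correct and takes essentially the same route as the paper: a reduction from \#$k$-$\ones$ of a ternary parity relation (hard by \cref{thm:kones_creignou}) that encodes each constraint as evidence on the corresponding $y_{ijk}$ variable and exploits the forced disagreement $z_{ijk}\neq y_{ijk}$ to shift every satisfying assignment's Hamming weight by exactly $n^3$. The only differences are cosmetic duals --- the paper uses the odd-parity relation $a\oplus b\oplus c=1$ and fixes $y_{ijk}=0$, $z_{ijk}=1$, while you use even parity and fix $y_{ijk}=1$, letting $z_{ijk}$ be forced --- neither of which changes the argument.
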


\begin{proof}
Consider the following constraint language:
\[\Gamma=\{a\oplus b\oplus c\}=\{\{(0,0,1),(0,1,0),(1,0,0),(1,1,1)\}\}.\] 
By Theorem~\ref{thm:kones_creignou}, $\#k$-$\ones(\Gamma)$ is $\#\Pol$-complete.
We show the $\#\Pol$-hardness of $\HMAR(f_{\text{aff}})$ by providing a deterministic, polynomial time reduction from $\#k$-$\ones(\Gamma)$.
We get input a $\Gamma$-formula $\phi(x_1,\ldots,x_n)$ and an integer $k\in\{0,1,\ldots,n\}$. We construct an `evidence string' $m$ for $\HMAR(f)$ as follows, consisting of an entry in $\{0,1,*\}$ for each variable $x_i$, $y_{ijk}$, $z_{ijk}$. 
For any constraint $x_i\oplus x_j\oplus x_k=1$ in $\phi$, set $y_{ijk}=0$ and $z_{ijk}=1$. All other variables are marginalized, i.e.,~their entry in $m$ is set to $*$. 
Then \#$k$-$\ones(\Gamma)(\phi,k)=\HMAR(f_{\text{aff}})(m,k+n^3)$. 
To see this, suppose $\phi(x)=1$; we find the only $y$ and $z$ such that $f(\mathbf{x,y,z})=1$, and we then observe that $|\mathbf{x,y,z}|=|\mathbf{x}|+n^3$. Every constraint of $\phi$ is satisfied, and so every width-$4$ constraint in $f$ with $y_{ijk}=0$ is satisfied. For constraints $x_i\oplus x_j\oplus x_k$ not in $\phi$, the corresponding width-$4$ clause in $f$ is satisfied by setting the free variable $y_{ijk}$ to whichever value ($0$ or $1$) is necessary. The values $z_{ijk}$ are then set to the opposite of the values of $y_{ijk}$ which satisfies the remaining width-$2$ clauses of $f$. For every $i,j,k\in [n]^3$ we have $z_{ijk}\neq y_{ijk}$, and so $|\mathbf{y}|+|\mathbf{z}|=n^3$. 
\end{proof}

Having shown that $\HMAR (f_{\text{aff}})$ is $\#\Pol$-hard, we have established that, unless $\FP=\#\Pol$, the function $f_{\text{aff}}$ is not in $\MAC$. In particular, this means that $f_{\text{aff}}$ cannot be efficiently represented by known tractable logical representation languages like d-DNNF or BDD, nor by tractable probabilistic models like SPNs or DPPs. This answers our central question negatively, showing that known models for tractable marginalization do not characterize the class of functions with tractable marginalization, suggesting the potential for yet more expressive-efficient tractable models. 
We remark that our separating example avoids the known efficient reductions from weighted to unweighted variants of counting problems, for example in weighted model counting \citep{chakraborty2015weighted} and weighted CSP counting variants \citep{bulatov2012complexity}. We also note that \cref{thm:sep_hmar_from_mar} partially resolves an open question on the succinctness of languages in the knowledge compilation literature \citep{koriche2013knowledge} as described in \cref{app:eadt}.

\section{Virtual Evidence Marginalization}\label{sec:vmar}

Interestingly, the computational problems we have studied so far can be characterized by the ability to evaluate the multilinear polynomial representation of the underlying function on increasingly large domains. To efficiently compute $f$ itself, one need only efficiently evaluate its multilinear polynomial $p$ on $\{0,1\}^n$. To efficiently compute $\MAR(f)$, it suffices to efficiently evaluate $p$ on $\{0,1/2,1\}^n$. 
For $\HMAR(f)$, it suffices to evaluate $p$ on a set of the form $\cup_{i=1}^{n-2}\left(\{0,1/2,1,a_i\}^n\right)$ for distinct $a_i\in \mathbb{R}$
as shown below in \cref{prop:vmar_in_hmar}. 
This begs the question: What power is afforded by the ability to evaluate the multilinear polynomial at all rational points? Nicely, this task corresponds naturally to probabilistic inference in the presence of \emph{virtual~evidence}.

Virtual evidence is a well known generalization of the standard notion of (hard) evidence \citep{bilmes2004virtual}. Where observing hard evidence means that the probabilities for some subset of outcomes in a distribution be `zeroed out', virtual evidence allows the rescaling of the probabilities by arbitrary (nonnegative) weights. %
Given a probability distribution over %
variables $X_1,\ldots,X_n$, observing -- conditioning on -- the hard evidence that $X_i=1$ can be viewed as scaling the probability for all outcomes with $X\neq 1$ by a factor of $0$ (and then renormalizing the distribution). On the other hand, it is often necessary, for example in Bayesian belief updates, to multiply the probability of outcomes with $X_i\neq 1$ by factors other than $0$.

We consider a general version of the problem of updating a distribution given virtual evidence, though multiple formulations are possible \citep{chan2005revision}. Let $X_1,\ldots,X_n$ be random variables supported on $\{0,1\}$ with joint probability mass function $p(x_1,\ldots,x_n)$. Given scaling factors $\alpha_1,\bar{\alpha}_1,\ldots,\alpha_n,\bar{\alpha}_n\in\mathbb{R}^{\ge 0}$ with either $\alpha_i>0$ or $\bar{\alpha}_i>0$ for each $i$, the mass function obtained after observing $(\alpha_1,\bar{\alpha}_1,\ldots,\alpha_n,\bar{\alpha}_n)$-virtual evidence is given by
\begin{align}\label{eq:virtual_evidence}
p(x_1,\ldots,x_n)\prod_{i=1}^n(\alpha_ix_i+\bar{\alpha}_i(1-x_i))
\end{align}
up to normalization. (Note that ignoring normalization for distributions in $\PM$ is not a computational issue, as normalization requires a single call to $\MAR(f)$, namely with all coordinates marginalized.)
We formulate the problem of virtual evidence marginalization as follows.

\begin{definition}[$\text{VMAR}(f)$ and $\PVM$]\label{def:vmar}
Let $f=(f_n)$
with $f_n:\{0,1\}^n\to\mathbb{Q}$, and let $p=(p_n)$
be the family of multilinear polynomials computing $f$. The virtual evidence marginalization problem for $f$, denoted $\text{VMAR}(f)$, gets input $x_1,x_2,\ldots,x_n \in \mathbb{Q}$ and outputs $p(x_1,\ldots,x_n)$.
Moreover, $f\in \PVM$ if $\VMAR(f)$ is polynomial time computable.
\end{definition}
If $f\in \PVM$, we say $f$ is tractable for virtual evidence marginalization. 
That $\VMAR(f)$ enables the incorporation of virtual evidence in the sense of \cref{eq:virtual_evidence} follows by a straightforward application of \cref{lem:network} which we give in \cref{app:vmar_is_virtual_evidence}. 
We immediately get that $\VMAR(f)$ can be solved for UFMACs by simply evaluating (\cref{lem:ufmac_eval}). %
\begin{proposition}\label{prop:FMAC_subset_WMAR}
$\MAC\subseteq \PVM$.
\end{proposition}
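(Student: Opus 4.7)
The plan is to show that for any $f \in \MAC$, the problem $\VMAR(f)$ is polynomial time computable by essentially composing uniformity with the polynomial time evaluability of UFMACs at rational points. The argument is almost immediate once one unpacks the definitions: membership in $\MAC$ gives us a uniform finally multilinear arithmetic circuit family $(C_n)$ whose $n$th circuit computes exactly the multilinear polynomial $p_n$ representing $f_n$, and $\VMAR(f)$ is defined as the evaluation of this very polynomial at an arbitrary rational point.

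Concretely, the Turing machine for $\VMAR(f)$ would proceed as follows. On input $x_1, \ldots, x_n \in \mathbb{Q}$, first invoke the uniformity machine for $(C_n)$ on $1^n$ to produce $C_n$ in polynomial time (\cref{def:uniform}). Then apply \cref{lem:ufmac_eval} to evaluate $C_n(x_1, \ldots, x_n)$ in polynomial time. Since $C_n$ computes $p_n$, the output is exactly $p_n(x_1, \ldots, x_n) = \VMAR(f)(x_1, \ldots, x_n)$. The total running time is polynomial in $n$ plus the bit-length of the inputs, so $f \in \PVM$.

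There is essentially no obstacle here: the hard technical work has already been done in \cref{lem:ufmac_eval}, which guarantees that finally multilinear, uniform circuits can be evaluated at rational inputs in polynomial time (this is precisely where the polynomial degree bound in \cref{def:finally_multilinear} earns its keep, preventing the bit-length of intermediate values from blowing up). All that remains for the proof of \cref{prop:FMAC_subset_WMAR} is to observe that $\VMAR(f)$ is, by definition, nothing more than pointwise evaluation of the multilinear polynomial represented by the circuit, so no additional transformation of the circuit (such as the network polynomial trick used for $\HMAR$) is required.
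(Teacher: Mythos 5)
Your proof is correct and matches the paper exactly: the paper proves \cref{prop:FMAC_subset_WMAR} in one line by observing that $\VMAR(f)$ is by definition evaluation of the multilinear polynomial, which \cref{lem:ufmac_eval} handles in polynomial time for a UFMAC. Your only (harmless) redundancy is invoking uniformity separately before \cref{lem:ufmac_eval}, whose statement already subsumes obtaining $C_n$.
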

The fact that functions in $\MAC$ permit efficient conditioning on virtual evidence generalizes known algorithms for special cases, including 
Bayesian networks with bounded treewidth \citep{bilmes2004virtual} and certain syntactically multilinear circuits \citep{chan2017incorporating,NEURIPS2021_1d0832c4}, with \cref{prop:FMAC_subset_WMAR} giving a generalization of them.
It is also straightforward to reduce Hamming weight marginalization to virtual evidence marginalization. 

\begin{proposition}\label{prop:vmar_in_hmar}
$\PVM\subseteq \PHM$.
\end{proposition}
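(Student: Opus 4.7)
The plan is to express the desired Hamming weight marginal as a single coefficient of a univariate polynomial $G(t)$ whose evaluations reduce to $\VMAR$ queries, and then recover that coefficient by Lagrange interpolation.

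Let the input be $m \in \{0,1,*\}^n$ and $k$. Let $F = \{i : m_i = *\}$ be the marginalized positions, let $W = |\{i : m_i = 1\}|$, and for $T \subseteq F$ write $x^T \in X_m$ for the completion of $m$ that places ones on $T$ and zeros on the rest of $F$, so $|x^T| = W + |T|$. Define
$$G(t) \,:=\, \sum_{T \subseteq F} f(x^T)\, t^{|T|},$$
so that the coefficient of $t^{k-W}$ in $G$ equals $\HMAR(f)(m,k)$ (and is zero when $k < W$ or $k - W > |F|$).

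Next I would identify $G(t)$ with a scaled evaluation of the multilinear polynomial $p$ of $f$. Substituting $y_i = \alpha$ for $i \in F$ and $y_i = m_i$ for $i \notin F$ into $p(y) = \sum_S f(v_S) \prod_{i \in S} y_i \prod_{i \notin S}(1-y_i)$ kills every term whose $S$ disagrees with $m$ on a fixed coordinate, leaving
$$p\bigl(\alpha,\ldots,\alpha, m_{\text{fix}}\bigr) \,=\, \sum_{T \subseteq F} f(x^T)\, \alpha^{|T|}(1-\alpha)^{|F|-|T|}.$$
Taking $\alpha = t/(1+t)$ and multiplying through by $(1+t)^{|F|}$ yields the clean identity
$$G(t) \,=\, (1+t)^{|F|}\; p\!\left(\tfrac{t}{1+t},\ldots,\tfrac{t}{1+t},\, m_{\text{fix}}\right).$$

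With this identity in hand the reduction is immediate. Choose any $|F|+1$ distinct positive rationals $t_0,\ldots,t_{|F|}$ (e.g.\ $t_j = j+1$, so $1+t_j \neq 0$); for each $t_j$ call $\VMAR(f)$ once on the rational input $(\tfrac{t_j}{1+t_j},\ldots,\tfrac{t_j}{1+t_j},\,m_{\text{fix}})$ and multiply by $(1+t_j)^{|F|}$ to obtain $G(t_j)$. Since $\deg G \le |F| \le n$, Lagrange interpolation recovers all coefficients of $G$ in polynomial time, and we output the coefficient of $t^{k-W}$. Every step is polynomial-time under the assumption $f \in \PVM$, so $\PVM \subseteq \PHM$.

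The only real subtlety — and the step I would be most careful about — is spotting the change of variables $\alpha = t/(1+t)$ that converts the mixed Bernoulli-style weights $\alpha^{|T|}(1-\alpha)^{|F|-|T|}$ into the monomial basis $t^{|T|}$ after rescaling by $(1+t)^{|F|}$. Once that substitution is in place, the remainder is standard univariate interpolation over $\mathbb{Q}$, and the reduction produces a polynomial number of $\VMAR$ queries on rational inputs with polynomial-size encodings.
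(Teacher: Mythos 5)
Your proof is correct and follows essentially the same route as the paper's: reduce $\HMAR(f)(m,k)$ to evaluating a degree-at-most-$n$ univariate polynomial at polynomially many rational points via $\VMAR$ queries, then recover the desired coefficient by interpolation. If anything, your derivation of the identity $G(t)=(1+t)^{|F|}\,p\bigl(t/(1+t),\ldots,t/(1+t),m_{\text{fix}}\bigr)$ makes explicit a step the paper leaves implicit --- its univariate polynomial $q(t)$ is defined by substitutions into the network polynomial $\bar{p}$, and one still needs exactly this kind of rescaling identity to turn each evaluation of $q$ into a legal $\VMAR$ query on $p$ itself --- so your write-up is more self-contained.
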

\begin{proof}
Suppose $\text{VMAR}(f)$ is polynomial time computable. Consider the algorithm given in the proof of \cref{prop:FMAC_ham}. This algorithm defines a univariate polynomial $q(t)$ in terms of the multivariate (multilinear) polynomial $p(x_1,\ldots,x_n)$ by setting each $x_i$ to either a constant or to $t$. 
The degree of $q(t)$ is at most $n$. 
Therefore, all the coefficients of $q(t)$ can be recovered by interpolation by evaluating at $n+1$ distinct points using the polynomial time algorithm for $\text{VMAR}(f)$. Now, $\text{HMAR}(f)$ is simply the appropriate coefficient, as per the proof of Proposition~\ref{prop:FMAC_ham}.%
\end{proof}

Intuitively, this shows that despite \cref{def:vmar} containing no explicit summation, there is in some sense a hidden marginalization problem in every $\VMAR$ query.
On the one hand, multiplying the inputs by arbitrary rationals can be viewed as updating the distribution given virtual evidence, but on the other hand, we have observed (\cref{lem:mar_is_eval_pol}) that marginalization of $f$ can be reduced to evaluating the multilinear polynomial $p$ for $f$ at $1/2,…,1/2$. Therefore the evaluation of $p$ at arbitrary rational points $a_1,…,a_n$ can be viewed as first applying virtual evidence corresponding to $2a_1,…,2a_n$ to $x_1,…,x_n$ to get $2a_1x_1,…,2a_nx_n$ and then evaluating at $x_1=…=x_n=1/2$, i.e., marginalizing. 
In particular, to compute a marginal probability given some virtual evidence, a single $\VMAR(f)$ query suffices.

Intuitively, $\VMAR(f)$ captures the computational power afforded by a $\MAC$, which raises the question: are UFMACs `complete' for virtual evidence marginalization in the sense that any function $f$ with an efficient algorithm (Turing machine) for $\text{VMAR}(f)$ in fact has a UFMAC? That is, do we have $\MAC=\PVM$? 
This is the version of a question (restricted to multilinear polynomials) which has been asked before and appears challenging to answer \cite{koiran2011interpolation}. We are able, however, to give such a completeness result in the real RAM model of computation.

\subsection{Completeness for Real RAMs}

Given the apparent difficulty of answering this question for discrete computation models (i.e., Turing machines) and that the question naturally extends to real inputs, we consider the real RAM model of computation in which real numbers are stored in constant space and manipulated by constant time arithmetic.
See \cref{app:real_ram} for a brief introduction to the model, though we refer the reader to \citet{erickson2022smoothing} for a full presentation.
First, we extend the notion of UFMACs to the real RAM model. 
We say a function family $f=(f_n)$ with $f:\{0,1\}^n\to\mathbb{R}$ with multilinear representation $p=(p_n)$ has a real RAM UFMAC 
if there exists a polynomial time real RAM which on input $1^n$ outputs an arithmetic circuit $C_n$ computing $p_n$, in some fixed standard representation (see \cref{realramrep}). 
Note that we no longer require the bound on degree since intermediate values are always a single word in the real RAM computation, avoiding the problem of circuit evaluation by Turing machines. 
We also extend the definitions of $\MAR$, $\HMAR$, and $\VMAR$ in the natural way to the real RAM model.
We then observe that if $f$ has a polynomial time real RAM computing $\VMAR(f)$, then the $f$ has real RAM UFMACs, showing that UFMACs are in that sense \emph{complete} for virtual evidence marginalization. The proof, given in \cref{app:real_ram}, observes that if the real RAM computes the polynomial correctly on all points, then there is some branch of the computation (i.e., ignoring comparisons) which computes it on all points, from which an arithmetic circuit can be recovered.

\begin{proposition}\label{prop:bss_wmar}
Let $f=(f_n)$
with $f_n:\{0,1\}^n\to\mathbb{Q}$.
If there exists a polynomial time real RAM computing $\text{VMAR}(f)$, then there exists a real RAM UFMAC computing~$f$.
\end{proposition}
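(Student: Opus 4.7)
The plan is to show that some single branch of the real RAM's computation tree already computes the multilinear polynomial $p_n$ identically as a rational function of its inputs, and then to extract a uniform division-free arithmetic circuit from this branch. The key observation is that along any fixed sequence of comparison outcomes, a real RAM performs only arithmetic operations on the inputs, so its output is a rational function of those inputs.

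First, I would formalize the branching analysis. Let $M$ be a polynomial-time real RAM with $\text{VMAR}(f)(x) = p_n(x)$ for all $x \in \mathbb{R}^n$, running in time $t(n) = n^{O(1)}$. A \emph{branch} $\sigma$ is a sequence of comparison outcomes of $M$; it determines both a semi-algebraic \emph{region} $R_\sigma \subseteq \mathbb{R}^n$ of inputs producing that sequence and a \emph{rational function} $r_\sigma(x_1,\ldots,x_n)$ computed by the arithmetic operations executed along $\sigma$. The finitely many $R_\sigma$ partition $\mathbb{R}^n$, so at least one $R_\sigma$ has nonempty Euclidean interior. On this interior, $r_\sigma$ and $p_n$ agree by correctness of $M$; since they agree on a nonempty open set, $r_\sigma = p_n$ as rational functions.

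Next comes the constructive part. Simulate $M$ on a deterministically chosen input $a \in \mathbb{Q}^n$ that lies in the interior of its branch's region (such an $a$ exists because the union of boundaries of the $R_\sigma$ is a positive-codimension semi-algebraic set defined by polynomially many polynomials of polynomial degree, so a suitable rational $a$ can be constructed, e.g., by evaluating $M$ on polynomially many candidate grid points and verifying stability under small perturbations). Let $\sigma$ be the branch taken. Re-simulate along $\sigma$ with the inputs replaced by formal indeterminates $x_1,\ldots,x_n$: each arithmetic step becomes a node in a circuit with $+,-,\times,\div$ gates, yielding a rational arithmetic circuit of size $O(t(n))$ computing $r_\sigma = p_n$. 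Finally, apply Strassen's classical elimination of divisions — which is valid because $p_n$ is a polynomial of degree at most $n$ — to obtain a division-free arithmetic circuit $C_n$ of polynomial size computing $p_n$. This whole procedure runs in polynomial time on a real RAM, so $f$ has a real RAM UFMAC.

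The main obstacle is the deterministic selection of $a$: one must ensure $a$ lies in a full-dimensional $R_\sigma$ rather than on a boundary, since only then does the symbolic re-trace yield $r_\sigma = p_n$ identically. A cleanly uniform workaround is to simulate $M$ symbolically and, at each comparison, break ties using a fixed rule consistent with evaluation at a generic point (e.g., a specific algebraically independent tuple, if the model provides such constants). A secondary technical point is confirming that Strassen's elimination of divisions preserves uniformity and polynomial size, but this is by now standard, and the resulting $C_n$ is automatically finally multilinear since it computes the multilinear polynomial $p_n$ of degree at most $n$.
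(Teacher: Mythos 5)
Your proposal follows essentially the same route as the paper's proof: view the polynomial-time real RAM for $\VMAR(f)$ as a computation tree; observe that each branch computes a rational function that agrees with $p_n$ on the semialgebraic cell of inputs taking that branch; since finitely many cells cover $\mathbb{R}^n$, some cell contains a nonempty open box, so that branch's function equals $p_n$ identically; then eliminate divisions by Strassen's method, which stays polynomial-size because the target polynomial has degree at most $n$ (the paper additionally truncates to homogeneous parts of degree at most $n$ throughout, citing Lemma~5.2 of Saptharishi's survey, and it separately disposes of the word-array operations by arithmetizing integer-to-real casts---a model detail you skip, harmlessly). Where you diverge is in trying to make the selection of the good branch \emph{effective}: the paper never attempts this. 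Its proof stops at the existence of a leaf $u$ with $q_u = p_n$ and extracts the circuit from that branch; no generic rational point is ever computed. Be aware that your proposed workarounds would not work as stated: the cell boundaries are zero sets of comparison polynomials whose degrees and coefficient sizes can be \emph{exponential} in the running time (a real RAM can square repeatedly), so polynomially many grid points may all lie in bad cells, and ``stability under small perturbations'' is not a finite test; likewise, the model provides no algebraically independent constants, since real registers can only be seeded with $0$ and $1$. So your core argument matches the paper's, and your flagged ``main obstacle'' is a genuine subtlety---but it is one the paper's own proof sidesteps by remaining non-constructive rather than one it resolves; if anything, a correct effective selection would require a further idea (e.g., building doubly-exponential ``generic enough'' evaluation points by repeated squaring, which the real RAM model does permit in polynomial time) that neither you nor the paper supplies.
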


\section{Related Work and Conclusion}\label{sec:open_questions}

We highlight some closely related research areas. 
While we focus on theoretical limits of marginalization, there is much work on leveraging theoretical progress in practice
\citep{sladek2023encoding,loconte2023subtractive,loconte2024sum,WangAAAI25}. Moreover, probabilistic models with tractable marginalization, in addition to quickly developing as performant probabilistic models in their own right \citep{LiuICML24}, also appear as integral components in recent proposals for control and alignment of deep generative models \citep{ZhangICML23, ZhangNeurIPS24, LiuICLR24, ahmed2022semantic} as well as numerous other applications \citep{pmlr-v235-wedenig24a, sppl}.
Where we study exact marginalization, approximate methods form their own vast area \citep{hoffman2013stochastic,liu2023generative}.
We considered functions of binary variables and note work on productive reductions to this setting \citep{garg2024bit,pmlr-v216-cao23b}.
Given our use of affine relations, we observe that such parity constraints appear in other work on boolean knowledge representation both in algorithms and languages \citep{fargier2008extending,koriche2013knowledge,fargier2014disjunctive,de2021compilation,chakraborty2021approximate}.

In summary, despite known models with tractable marginalization being expressible as UFMACs, we show that there exist functions with tractable marginalization, yet no UFMAC assuming $\FP\neq\#\Pol$. On the other hand, UFMACs support the more powerful Hamming weight marginalization and virtual evidence marginalization, being complete for the latter in the real RAM model of computation.

We conclude with two open questions.
First, is the inclusion $\PVM\subseteq\PHM$ strict? 
Second, observing that all marginalization algorithms in this paper are amenable to parallelization, do there exist sequential ($\Pol$-hard) marginalization problems?

\section*{Acknowledgements}
This work was funded in part by the DARPA ANSR, CODORD, and SAFRON programs under awards FA8750-23-2-0004, HR00112590089, and HR00112530141, NSF grant IIS1943641, and gifts from Adobe Research, Cisco Research, and Amazon. Approved for public release; distribution is unlimited.

\section*{Impact Statement}

This paper presents work whose goal is to advance the field of 
Machine Learning. There are many potential societal consequences 
of our work, none which we feel must be specifically highlighted here.

\bibliography{refs}
\bibliographystyle{icml2025}

\newpage
\appendix
\onecolumn

\section{Uniform Finally Multilinear Arithmetic Circuits}

\subsection{Note on Representation}\label{app:representation}
The choice of representation is not important up to polynomial changes in length. For example and completeness, one can take the description following \citep[Definition 6.5]{arora2009computational}, substituting node types in the natural way, and indicating an edge not with zero or one but with a rational $a/b$ with $a,b\in\mathbb{Z}$ given in base two.

\subsection{Efficient Circuit Evaluation}\label{app:efficient_circuit_evaluation}

We show that uniform finally multilinear arithmetic circuits can be evaluated at rational points in polynomial time. To do so, we first show that if a circuit family can be evaluated at integral points in polynomial time then it can be evaluated at rational points in polynomial time, a reduction which indeed holds for general arithmetic circuit families. (We also remark that multilinearity is not used in any of the proceeding arguments, and so, indeed, any uniform arithmetic circuit family with the degree bound of \cref{def:finally_multilinear} can be efficiently evaluated in the same way.)

\begin{lemma} \label{lem:reduce_rational_to_integral}
If there is a polynomial time Turing machine for evaluating the arithmetic circuit family $C=(C_n)_{n=1,2,\ldots}$ at integral points $x_1,\ldots,x_n\in\mathbb{Z}$, then there is a polynomial time Turing machine that evaluates $C$ at rational points $x_1,\ldots,x_n\in\mathbb{Q}$.
\end{lemma}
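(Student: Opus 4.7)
The plan is to reduce evaluation at rational points to evaluation at integer points through a univariate parameterization. Given rationals $x_i = a_i/b_i$ with $a_i, b_i \in \mathbb{Z}$, set $B = \prod_{j=1}^n b_j$ and $c_i = B x_i \in \mathbb{Z}$, so $x_i = c_i/B$. Introduce a fresh parameter $t$ and consider the univariate polynomial $q(t) := C_n(c_1 t, c_2 t, \ldots, c_n t)$. By construction, $q(1/B) = C_n(x_1, \ldots, x_n)$ is exactly the value we want to compute. For each integer $k$, the value $q(k) = C_n(c_1 k, \ldots, c_n k)$ is obtainable by a single call to the hypothesized polynomial-time integer evaluator $M$ on integer inputs of polynomial bit length.

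Given oracle access to $q$ at integer points, we recover $q$ itself by interpolation. Let $d$ denote the degree of $q(t)$, which is at most the total degree of the polynomial computed by $C_n$. The algorithm queries $M$ to obtain $q(1), q(2), \ldots, q(d+1)$, applies Lagrange interpolation to recover the coefficients of $q$, and evaluates the interpolant at $t = 1/B$ to yield $C_n(x_1, \ldots, x_n)$. Each $M$-call is polynomial time, each output has polynomial bit length, and Lagrange interpolation over rationals of polynomial size runs in polynomial time, so the entire procedure runs in polynomial time provided $d$ is polynomially bounded.

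The main obstacle is to establish the polynomial bound on $d$. For the immediate application in \cref{lem:ufmac_eval}, the bound is free from \cref{def:finally_multilinear}: each internal node of a finally multilinear circuit computes a polynomial of polynomial total degree, so in particular $d \le \mathrm{poly}(n)$. For the more general statement about arbitrary arithmetic circuit families, one argues from the running time of $M$ itself: polynomial time evaluation at small integer inputs forces the output bit length, and hence the effective degree of the polynomial computed by $C_n$, to be polynomial, so that formal but cancelling high-degree monomials do not affect the interpolation. Making this degree bound explicit and tracking bit sizes carefully is the only delicate part; the remaining steps are routine applications of common-denominator clearing and polynomial interpolation.
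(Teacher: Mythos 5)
Your proposal is correct and follows essentially the same route as the paper's proof: clear denominators to write $x_i = c_i/D$, substitute $x_i \mapsto c_i t$ to obtain a univariate polynomial, recover it by interpolation from evaluations at polynomially many integer points (each a call to the assumed integer evaluator), and evaluate the interpolant at $t = 1/D$ (the paper equivalently multiplies the coefficient of $t^k$ by $D^{-k}$ and sums). The only difference is cosmetic: where you discuss how the polynomial degree bound arises (for the application it is free from \cref{def:finally_multilinear}, or else it can be argued from the evaluator's running time), the paper simply uses the degree bound available in its setting of finally multilinear circuits.
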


\begin{proof}
Consider the input $\frac{a_1}{b_1},\frac{a_2}{b_2},\ldots,\frac{a_n}{b_n}$, with each $a_i,b_i\in\mathbb{Z}$.
Let $D=\prod_{i=1}^nb_i$, a common denominator.
Let $c_i=a_i\prod_{j\in[n]\backslash \{i\}}b_j \in \mathbb{Z}$, and so $\frac{a_i}{b_i}=\frac{c_i}{D}$.
Denote by $p$ the polynomial computed by $C_n$ with the degree of $p$ being $d^*$. Then, using a new variable $t$, observe that 
\begin{align}
p\left(t\frac{c_1}{D},\ldots,t\frac{c_n}{D}\right)&=\sum_{d_1,\ldots,d_n\in[0,1,\ldots,d^*]^n}\alpha_{d_1,\ldots,d_n}\prod_{i=1}^n\left(t\frac{c_i}{D}\right)^{d_i}&&\label{eq:symb_trick}\\
&=\sum_{k=0}^n\left(\frac{1}{D}\right)^kt^k\sum_{d_1,\ldots,d_n\in [0,1,\ldots,d^*]^n:\sum_{i=1}^nd_i=k}\alpha_{d_1,\ldots,d_n}\prod_{i=1}^n\left(c_i\right)^{d_i}.&&\nonumber
\end{align}
where $\alpha_{d_1,\ldots,d_n}\in\mathbb{Q}$. In particular, consider the univariate polynomial in $t$, 
\begin{align*}
f(t)=p\left(tc_1,\ldots,tc_n\right)&=\sum_{d_1,\ldots,d_n\in[0,1,\ldots,d^*]^n}\alpha_{d_1,\ldots,d_n}\prod_{i=1}^n\left(tc_i\right)^{d_i}&&\\
&=\sum_{k=0}^nt^k\sum_{d_1,\ldots,d_n\in [0,1,\ldots,d^*]^n:\sum_{i=1}^nd_i=k}\alpha_{d_1,\ldots,d_n}\prod_{i=1}^n\left(c_i\right)^{d_i}.&&
\end{align*}
The polynomial $f(t)$ is of degree at most $n$ and thus can explicitly by interpolation given the evaluation of $p$ at $n+1$ distinct (indeed, integral) points. Then, having $f(t)$, the value $p\left(\frac{c_1}{D},\ldots,\frac{c_n}{D}\right)=p\left(\frac{a_1}{b_1},\ldots,\frac{a_n}{b_n}\right)$ can be recovered, following \cref{eq:symb_trick}, by multiplying the coefficient of $t^k$ in $f(t)$ by $D^{-k}$ for $k=0,1,\ldots,n$ and summing. (Note that, while recovering $f(t)$ in this black-box manner suffices in general, $f(t)$ can be obtained more directly in the case of uniform circuits by evaluating the circuit ``symbolically" according to \cref{eq:symb_trick}, requiring $1$ rather than $n+1$ evaluations.)
\end{proof}

We now show that uniform finally multilinear arithmetic circuits can be efficiently evaluated at rational points, \cref{lem:ufmac_eval}, restated here for convenience.

\textbf{\cref{lem:ufmac_eval}.}\emph{
Let $C=(C_n)_{n=1,2,\ldots}$ be a uniform finally multilinear arithmetic circuit family. Then there exists a polynomial time Turing machine which on input $x_1,x_2,\ldots,x_n\in\mathbb{Q}$ computes $C_n(x_1,x_2,\ldots,x_n)$.}

\begin{proof}
By \cref{lem:reduce_rational_to_integral}, it suffices to show that $C$ can be evaluated at integral points. 
First, obtain the circuit $C_n$ in polynomial time by the uniformity of $C$. Next, we show that evaluating $C_n(x_1,\ldots,x_n)$ in the natural way takes polynomial time, owing to the polynomial bound on the degree of internal nodes. 

Assume inputs $a_1,\ldots,a_n\in\mathbb{Z}$ are given in base two (with an additional sign bit). 
For $a\in\mathbb{Z}$, denote by $||a||$ the number of bits needed to write $a$ in this way (e.g., $||a||\le 2+\log_2 a$). Then, for integers $a,b$, we have $||a+b||\le 1+\max\{||a||,||b||\}$ (Fact 1) and $||a\cdot b||\le ||a||+||b||$ (Fact 2), and, moreover, such additions and multiplications are polynomial time computable.%

Consider the augmented circuit $C_n'$ obtained by replacing the constants of $C_n$ by fresh variables. 
We say a node is of degree $d$ if the polynomial it computes is of degree $d$ (in both the original and fresh variables).
Let $B(d)$ be the maximum number of bits needed to write the integer computed by any node of degree $d$ in $C_n'$.
We show by induction that $B(d)\le (3d-1)N$, where $N$ is the length of the input (i.e., $N\approx \sum_{i=1}^n||a_i||$). This completes the proof since the additions and multiplications themselves are polynomial time computable in this bitwidth, and $d$ is bounded by a polynomial in $n\le N$ by \cref{def:uniform}. Note that by uniformity of $C$ there exists a (univariate) polynomial $q$ upper bounding the size of the representation of $C_n'$ by $q(n)$. 

For the base case, we upper bound $B(1)$ by considering all nodes of degree $1$ by cases (there are no nodes in $C_n'$ of degree zero). For input nodes $x_i$ in $C_n'$ which were already variables in $C_n$, the bitwidth is at most $N\le q(N)$ since they are evaluated as $a_i$. For input nodes $z_i$ in $C_n'$ corresponding to the constants of $C_n$, the bitwidth is at most $q(n)\le q(N)$ by uniformity of $C$. The nodes of degree one considered so far thus require at most $q(N)$ bits each. 
All other nodes must be the sum of two\footnote{Enforcing at most two children per node is straightforward in polynomial time (and so changing the size of the circuit by at most a polynomial factor).} degree one nodes and since there are at most $q(n)\le q(N)$ nodes total in $C_n'$, and by Fact 1, we get $B(1)\le 2q(N)$. 
Proceeding by induction, assume $B(k)\le (3d-1)q(N)$, and consider $B(k+1)$. Nodes of degree $k+1$ are obtained either (i) as a product of lower degree nodes or (ii) as a sum of nodes with degree at most $k+1$. For case (i) and by Fact 2, the maximum resulting bitwidth is at most $\max_{i\in[k]}B(i)+B(k+1-i)\le (3(k+1)-2)q(N)$ where the inequality follows from the inductive hypothesis. For case (ii), since there are at most $q(n)\le q(N)$ nodes in the circuit, we can have at most $q(N)$ many additions of degree $k+1$ nodes. Hence, by Fact 1 we have $B(k+1)\le q(N)+(3(k+1)-2)q(N)=(3(k+1)-1)q(N)$ as needed.
\end{proof}

\section{Virtual Evidence}\label{app:vmar_is_virtual_evidence}
Consider a function $f=(f_n)$ with $f_n:\{0,1\}^n\to\mathbb{Q}$ with multilinear polynomial $p=(p_n)$. Here, we interpret the function as a probability mass function. We show how to use $\VMAR(f)$ to observe virtual evidence on this distribution, in a way which allows multiple such observations of virtual evidence (by its commutativity) and so maintaining the ability to compute $\VMAR$ queries (and so $\HMAR$ and $\MAR$ queries) on the resulting distributions. Consider the following expression
with $\alpha_ix_i$ and $\bar{\alpha}_i\bar{x}_i$ not both zero (i.e., $x_i\in\{0,1\}$, $\bar{x}_i\coloneqq 1-x_i$, and $\alpha_i,\bar{\alpha}_i$ not both zero as is guaranteed for valid virtual evidence scaling factors).
\begin{align}
&\left(\prod_{i=1}^{n}(\alpha_ix_i+\bar{\alpha_i}\bar{x}_i) \right) p\left(\frac{\alpha_1x_1}{\alpha_1x_1+\bar{\alpha}_1\bar{x}_1},\ldots,\frac{\alpha_nx_n}{\alpha_nx_n+\bar{\alpha}_n\bar{x}_n}\right) && \label{eq:vmar_ve_expression}\\
&=\left(\prod_{i=1}^{n}(\alpha_ix_i+\bar{\alpha_i}\bar{x}_i) \right) \sum_{S\subseteq [n]}p(v_s)\prod_{i\in S}\frac{\alpha_ix_i}{\alpha_ix_i+\bar{\alpha}_i\bar{x}_i}\prod_{i\notin S}\left(1-\frac{\alpha_ix_i}{\alpha_ix_i+\bar{\alpha}_i\bar{x}_i}\right) && \text{by interpolation}\nonumber\\
&= \sum_{S\subseteq [n]}p(v_S)\prod_{i\in S}\alpha_ix_i\prod_{i\notin S}\bar{\alpha}_i\bar{x}_i && \label{eq:stop_here}\\
&= \left(\prod_{i=1}^n(\alpha_ix_i+\bar{\alpha}_i\bar{x}_i)\right)\sum_{S\subseteq [n]}p(v_S)\prod_{i\in S}x_i\prod_{i\notin S}\bar{x}_i && \text{using $x_i\neq \bar{x}_i$ and $x^2=x$ for $x\in\{0,1\}$}\nonumber\\
&=\bar{p}(x_1,\bar{x}_1,\ldots,x_n,\bar{x}_n)\prod_{i=1}^n(\alpha_ix_i+\bar{\alpha}_i\bar{x_i})&&\text{by definition}\nonumber\\
&=p(x_1,\ldots,x_n)\prod_{i=1}^n(\alpha_ix_i+\bar{\alpha}_i\bar{x}_i)&&\text{by definition} \nonumber
\end{align}
Thus evaluating the expression \cref{eq:vmar_ve_expression} allows one to condition on virtual evidence given an algorithm for $\VMAR(f)$.
Note that \cref{eq:stop_here} can be written as 
$\sum_{S\subseteq [n]}p(v_S)\left(\prod_{i\in S}\alpha_i\prod_{i\notin S}\bar{\alpha}_i\right)\prod_{i\in S}x_i\prod_{i\notin S}\bar{x}_i$. By inspection, this already gives the desired effect of virtual evidence (i.e., scaling the probabilities of outcomes by the corresponding factor); we show the final three steps to demonstrate alignment with the definition given in \cref{eq:virtual_evidence} (and so relying on $x_i,\bar{x}_i\in\{0,1\}$).

\section{Real RAM}\label{app:real_ram}
The real RAM is a standard computational model used extensively for modeling computations involving real numbers, 
widely popular in the computational geometry community; 
we refer to \citet{preparata2012computational} for more details. 
Several variants of the model have been developed, and here we briefly describe a recent formalization provided by \cite{erickson2022smoothing}, which has the advantage of describing \emph{uniform} algorithms. Every real RAM algorithm takes as input a pair of vectors $(a,b) \in \mathbb{R}^n\times \mathbb{Z}^m$, and has a fixed bit size $w$ of every word. Essentially, a word is any integer in $\{0,...,2^w-1\}$ which can be represented as a sequence of $w$ bits. A word size $w = \Omega(log(n+m))$ suffices, with the additional advantage of providing $O(1)$ access to input data. The machine contains two random access arrays $W[0...2^w-1]$ and $R[0...2^w-1]$, to store words and exact real values, respectively. There is a machine program counter that is incremented at each step, and at any point the machine executes the step indicated by this counter. The runtime of the machine is the number of instructions executed by the machine before halting. On the word array, we are allowed constant and memory assignments, comparisons between different word values, rounded arithmetic operations like addition, subtraction, multiplication, and division, and bitwise boolean operations. For the real array, we are allowed constant assignments to $0$ or $1$, memory assignments, comparison with $0$, and exact arithmetic operations such as addition, subtraction, multiplication, and division. Note that here we consider real RAMs without the ability to perform square roots on real inputs. The major points of difference between the real array and the word array are: while we are allowed to cast integers as reals, we do not allow casting reals to integers (for example, by using floor function), testing whether a real register stores an integer value, or any access to the binary representation of the real numbers.
\subsection{Real RAM UFMACs}\label{realramrep}
We say that $f=(f_n)$ with $f:\{0,1\}^n\to\mathbb{R}$ has a real RAM UFMAC if there exists a polynomial time real RAM which on input $1^n$ outputs an arithmetic circuit $C_n$ computing $f_n$. The real RAM operates on the padded input $(1^{p(n)},1^{p(n)})$, where $p(n)$ is a polynomial bound on the size of $C_n$, i.e., at most the running time of the real RAM. The output is a circuit in some standard representation, with the gates, edge relations, and edge weights encoded in the memory array on output, using real registers to store edge weights of the circuit.

\subsection{Proof of \cref{prop:bss_wmar}}\label{app:vmar_extra_proof}

\begin{proof}
Consider any input for $\text{VMAR}(f)$ given to the real RAM as the tuple $\langle W,R\rangle$ where $W$ represents the word array (storing $w$-bit numbers), and $R$ represents the array of reals. The important thing to note is that the input points for the VMAR query are given to us in $R$, and $W$ is empty initially. Also, as mentioned in Appendix~\ref{app:real_ram} above, the only real operation affected by the word array is when we cast integers to reals. However, this can be replicated purely on the real registers by creating the constant (by repeated multiplication and addition) on the reals, and then using that. Since, any word size is bounded by $2^w$, we can make the constant in $poly(w)$ time by repeated multiplication. Hence, any computation on the word array doesn't actually contribute to the final output. For the sake of completeness, we do note that since all words are $w$-bit integers, any computation involving them is a bit operation, and hence can be arithmetized with only a polynomial blowup. We now move to the main argument of the proof: showing how to convert any arithmetic operations on the reals to an arithmetic circuit computation in polynomial time.
We are allowed four operations: addition, subtraction, multiplication, and division. Clearly, the first three easily give rise to an arithmetic circuit.
Whenever we see a division operator, we use the famous result of \citet{StrassenDiv} to eliminate the division operator. Since, by assumption, the real RAM computes a multilinear polynomial in the end, we know that the degree is bounded by $n$, and hence to remove the division operator we only need to consider truncations up to degree $n$. Further, to avoid degree blowup on the intermediate nodes, we only need to consider homogenous parts up to degree $n$. This can be achieved with only a polynomial blowup in size (Lemma 5.2 in \citet{saptharishi2015survey}). 

Finally, let us now look at the computation of the real RAM. First assume that the real RAM does not make any comparison operations. Then the computation of the real RAM can be turned into an arithmetic circuit computing some polynomial $q$, by the arguments above. Note, at each point we only compute the homogenous parts upto degree $n$. By the definition of virtual evidence, $q(x)$ coincides with $p(x)$ for all $x\in \mathbb{Q}^n$ where $p$ is the multilinear polynomial representation of $f$. Since $q$ and $p$ are polynomials, they also have to be the same as polynomials. Since $p$ is multilinear, the obtained circuit is finally multilinear by definition.

Now assume that the real RAM does make comparisons. We get some
computation tree. At each leaf $u$, we compute a polynomial $q_u$. This polynomial coincides with $p$ for each input $x$ that takes the path to $u$. For each leaf $u$, the set of all such $x$ form a semialgebraic set. Since all these sets together cover the whole $\mathbb{R}^n$, there is at least one $u$ such that the corresponding semialgebraic set contains the product $I_1\times \ldots \times I_n$ of nonempty open intervals. On this product, $q_u(x) = p(x)$, and therefore, $q_u$ and $p$ are the same polynomials. 
\end{proof}

\section{Succinctness Corollary}\label{app:eadt}

In the literature on \emph{knowledge compilation}, representation languages for boolean functions are compared in their tractability for various transformations and queries as well as in their succinctness. Marginalization corresponds to the query of counting in the boolean setting (up to partial assignments), and \cref{thm:sep_hmar_from_mar} settles an open question on the succinctness of languages with tractable counting. 
There are several well known languages that support polynomial time counting (e.g., OBDD, SDD), but most of them are special cases of (and so not more succinct than) d-DNNF. Moreover, d-DNNF straightforwardly translates (in linear time) into syntactically multilinear arithmetic circuits, thus falling within $\MAC$. However, \citet{koriche2013knowledge} introduced the representation language Extended Affine Decision Tree (EADT) which supports polynomial time counting, and they left open the succinctness relationship between d-DNNF and EADT. While we refer the reader to \citet{koriche2013knowledge} for full definitions, it is not hard to see that our function $f_{\text{aff}}$ can be represented efficiently in EADT (in fact even in the much less succinct though incomplete language AFF \citep{fargier2008extending}). We are therefore able to partially resolve the succinctness relationship between d-DNNF and EADT.

For a formula $\phi$ in some representation language $L$, let $|\phi|$ denote the size of $\phi$ (say, the number of bits in some reasonable fixed representation). For representation languages $L_1$, $L_2$, we say $L_1$ is at least as succinct as $L_2$ (denoted $L_1\le_s L_2$) if there is a polynomial $p$ such that for any $\phi_2 \in L_2$, there exists a $\phi_1\in L_1$ with $\phi_1 \equiv \phi_2$ such that $|\phi_1|\le p(|\phi_2|)$.
Here $\phi_1 \equiv \phi_2$ means that $\phi_1$ and $\phi_2$ represent the same boolean function.

\begin{corollary}
Unless $\mathsf{PH}=\Sigma_2$, $\text{d-DNNF}\nleq_s\text{EADT}$.
\end{corollary}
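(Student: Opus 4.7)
The plan is to prove the contrapositive: assume $\text{d-DNNF}\le_s\text{EADT}$ and derive $\mathsf{PH}=\Sigma_2$. By the stated fact that $f_{\text{aff}}$ admits polynomial-size EADT (indeed AFF) representations, the succinctness assumption yields a non-uniform family of polynomial-size d-DNNFs computing $f_{\text{aff}}$. Invoking the standard linear-time translation from d-DNNF into syntactically multilinear arithmetic circuits converts this into a polynomial-size (non-uniform) circuit family computing the multilinear representation of $f_{\text{aff}}$; informally, $f_{\text{aff}}$ is non-uniformly in $\USMAC\subseteq \MAC$.

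Next, I would upgrade the proof of $\MAC\subseteq\PHM$ (\cref{prop:FMAC_ham}) to the non-uniform setting. Given the circuit family as polynomial-length advice, one performs exactly the same three steps as in that proof: produce the network polynomial via \cref{lem:network}, substitute the appropriate $0$, $1$, $t$ values dictated by the $\HMAR$ input, and extract the coefficient of $t^k$ in the resulting univariate polynomial. The efficient evaluation guarantee of \cref{lem:ufmac_eval} depends only on the finally multilinear structure and the polynomial bounds on size and intermediate degree, all of which survive the passage to the non-uniform setting. This places $\HMAR(f_{\text{aff}})\in \FP/\mathsf{poly}$.

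The final step applies the $\#\Pol$-hardness of $\HMAR(f_{\text{aff}})$ from \cref{thm:sep_hmar_from_mar} to conclude $\#\Pol\subseteq \FP/\mathsf{poly}$. This immediately gives $\NP\subseteq \Pol^{\#\Pol}\subseteq \Pol/\mathsf{poly}$, and the Karp--Lipton theorem then yields $\mathsf{PH}=\Sigma_2$.

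The main points requiring care are (i) verifying that the EADT (or AFF) representation of $f_{\text{aff}}$ is indeed polynomial-size, which one does by writing out the system of affine constraints defining $f_{\text{aff}}$ explicitly in the syntax of AFF, and (ii) confirming that \cref{lem:ufmac_eval} transfers to the non-uniform setting---which it does since its proof uses only the polynomial bounds on circuit size and intermediate degree, not uniformity itself. Beyond these routine checks, the argument is a clean assembly of ingredients already developed in the paper together with the standard complexity-theoretic consequences of $\#\Pol\subseteq \FP/\mathsf{poly}$.
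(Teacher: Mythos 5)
Your proposal is correct and follows essentially the same route as the paper's own proof: polynomial-size EADT/AFF representability of $f_{\text{aff}}$, the hypothetical succinct d-DNNF translated into (non-uniform) syntactically multilinear circuits, the Hamming-weight marginalization algorithm of \cref{prop:FMAC_ham} run with the circuit family as advice to place $\HMAR(f_{\text{aff}})$ in $\FP/\mathsf{poly}$, and then \cref{thm:sep_hmar_from_mar} plus Karp--Lipton to collapse $\mathsf{PH}$ to $\Sigma_2$. Your write-up is in fact more careful than the paper's about the non-uniformity of the advice circuits and about why \cref{lem:ufmac_eval} survives without uniformity, but the argument is the same.
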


\begin{proof}
Let $f_{\text{aff}}$ be as defined in \cref{eq:f}, and so $\text{HMAR}(f_{\text{aff}})$ is $\# \textsf{P}$-hard by \cref{thm:sep_hmar_from_mar}. 
Note that $f_{\text{aff}}$ can be represented in polynomial size in either AFF or EADT.
But, if $f_{\text{aff}}$ can be represented by a polynomial size d-DNNF, then by the algorithm of \cref{prop:FMAC_ham} we have $\text{HMAR}(f_{\text{aff}})\in \textsf{P}\backslash\textsf{poly}$. In particular, this gives 
$\# \Pol\subseteq \FP\backslash \textsf{poly}$ and so
$ \NP\subseteq \Pol\backslash \textsf{poly}$, collapsing the polynomial hierarchy to  $\Sigma_2$ by the Karp-Lipton Theorem \citep{karp-lipton}.
\end{proof}

\end{document}